
\documentclass[oneeqnum,onefignum]{siamltex}

\usepackage{epsfig}

\usepackage{amssymb}
\usepackage{amsxtra}
\usepackage{amsmath}
\usepackage{latexsym}
\usepackage{graphics}
\usepackage{verbatim}
\usepackage{epsfig}
\usepackage{setspace}

\topmargin=0.3in
\baselineskip=1.1\baselineskip


\newtheorem{claim}[theorem]{Claim}
\newtheorem{Def}{Definition}[section]

\def\a{\alpha}

\def\l{\lambda}

\def\Z{{\mathbb Z}}

\def\F{{\mathbb F}}

\def\x{{\mathbf x}}

\def\0{{\mathbf 0}}
\def\1{{\mathbf 1}}

\def\cC{{\mathcal C}}
\def\cD{{\mathcal D}}

\def\cF{{\mathcal F}}
\def\cG{{\mathcal G}}

\def\cM{{\mathcal M}}

\def\cP{{\mathcal P}}

\def\cV{{\mathcal V}}

\def\tcV{\widetilde{\cV}}
\def\tV{\widetilde{V}}
\def\oV{\overline{V}}
\def\oE{\overline{E}}

\def\ocV{\overline{\cV}}
\def\ocG{{\overline{\cG}}}
\def\mfC{{\mathfrak C}}

\def\pw{\text{pw}}
\def\tw{\text{tw}}
\def\cl{\text{cl}}

\def\punc{\setminus\!}
\def\del{\punc}
\def\shorten{/}
\def\sh{\shorten}
\def\con{\shorten}

\def\define{\stackrel{\text{\footnotesize def}}{=}}

\def\disj{\stackrel{\cdot}{\cup}}

\title{Matroid Pathwidth and \\ Code Trellis Complexity\thanks{This 
work was supported in part by a research grant from the Natural Sciences and 
Engineering Research Council (NSERC) of Canada.}}
\author{Navin Kashyap\footnotemark[2]}

\begin{document}

\maketitle

\renewcommand{\thefootnote}{\fnsymbol{footnote}}
\footnotetext[2]{Dept.\ of Mathematics and Statistics, Queen's University,
Kingston, ON, K7L 3N6, Canada. Email: {\tt nkashyap@mast.queensu.ca}}

\renewcommand{\thefootnote}{\arabic{footnote}}
\setcounter{footnote}{0}

\begin{abstract}
We relate the notion of matroid pathwidth to the minimum trellis
state-complexity (which we term trellis-width) of a linear code, 
and to the pathwidth of a graph. By reducing from the problem 
of computing the pathwidth of a graph, we show that the problem 
of determining the pathwidth of a representable matroid is NP-hard. 
Consequently, the problem of computing the trellis-width
of a linear code is also NP-hard. For a finite field $\F$,
we also consider the class of $\F$-representable matroids of 
pathwidth at most $w$, and correspondingly, the family of linear 
codes over $\F$ with trellis-width at most $w$. 
These are easily seen to be minor-closed. 
Since these matroids (and codes) have 
branchwidth at most $w$, a result of Geelen and Whittle shows 
that such matroids (and the corresponding codes) are characterized 
by finitely many excluded minors. We provide the complete list of 
excluded minors for $w=1$, and give a partial list for $w=2$. 
\end{abstract}

\begin{keywords}
Matroids, pathwidth, linear codes, trellis complexity, NP-hard.
\end{keywords}

\begin{AMS}
05B35, 94B05
\end{AMS}

\pagestyle{myheadings}
\thispagestyle{plain}
\markboth{NAVIN KASHYAP}
{MATROID PATHWIDTH AND CODE TRELLIS COMPLEXITY}

\section{Introduction\label{intro_section}}

The notion of pathwidth of a matroid has received some recent
attention in the matroid theory literature \cite{GGW06}, \cite{hall07}.
This notion has long been studied in the coding theory literature,
where it is used as a measure of trellis complexity of a linear 
code \cite{muder}, \cite{For94}, \cite{vardy}. However, there appears
to be no standard coding-theoretic nomenclature for this notion.
It has been called the state complexity of a code in \cite{horn},
but the use of this term there conflicts slightly with 
its use in \cite{vardy}. So to avoid ambiguity, we will give 
it a new name here --- \emph{trellis-width} --- which acknowledges its 
roots in trellis complexity.

The relationship between matroid pathwidth and code trellis-width
can be made precise as follows. To an arbitrary linear
code $\cC$ over a finite field $\F$, we associate a matroid, 
$M(\cC)$, which is simply the vector matroid, over $\F$,
of any generator matrix of the code. Recall that in coding theory,
a matrix $G$ is called a generator matrix of a code $\cC$,
if $\cC$ is the rowspace of $G$. Consequently, the matroid $M(\cC)$ does 
not depend on the actual choice of the generator matrix, and so
is a characteristic of the code $\cC$. The code $\cC$ may 
in fact be viewed as a representation over $\F$ of the matroid 
$M(\cC)$. The trellis-width of $\cC$ is simply the pathwidth of 
$M(\cC)$; we will give the precise definition of matroid pathwidth 
in Section~\ref{pw_section}.

It has repeatedly been conjectured in the coding theory literature 
that computing the trellis-width of a linear code over a fixed finite
field $\F$ is NP-hard \cite{horn}, \cite{jain}, \cite[Section~5]{vardy}. 
This would imply that the corresponding decision problem 
(over a fixed finite field $\F$) --- given a generator matrix for 
a code $\cC$ over $\F$, and a positive integer $w$, deciding 
whether or not the trellis-width of $\cC$ is at most $w$ ---
is NP-complete. This decision problem has been given various
names --- ``Maximum Partition Rank Permutation'' \cite{horn},
``Maximum Width'' \cite{jain} and ``Trellis State-Complexity'' 
\cite{vardy}.

An equivalent statement of the trellis-width conjecture above
is the following: given a matrix $A$ over $\F$, the problem of 
computing the pathwidth the vector matroid $M[A]$ is NP-hard. 
In this paper, we prove the 
above statement for any fixed field $\F$, not necessarily finite.
Our proof is by reduction from the problem of computing the
pathwidth of a graph, which is known to be NP-hard \cite{arnborg}, 
\cite{bod93}. Thus, in particular, computing the trellis-width 
of a linear code over $\F$ is NP-hard, which settles the aforementioned 
coding-theoretic conjecture.

The situation is rather different if we weaken the trellis-width
decision problem above by \emph{not} considering the integer $w$ 
to be a part of the input to the problem. 
In other words, for a fixed finite field $\F$,
and a \emph{fixed} integer $w > 0$, consider the following problem:
\\[-6pt]

\begin{quote}
given a length-$n$ linear code $\cC$ over $\F$, decide whether or
not $\cC$ has trellis-width at most $w$. \\[-6pt]
\end{quote}
%
The equivalent decision problem for matroid pathwidth would be to decide 
(for a fixed finite field $\F$ and integer $w > 0$) whether or not a 
given $\F$-representable matroid has pathwidth at most $w$. 
Based on results from the structure theory of matroids \cite{GGW}, we 
strongly believe that these problems are solvable in polynomial time.

In the process of studying matroids of bounded pathwidth, we 
observe that for any finite field $\F_q = GF(q)$ and integer $w > 0$, 
the class, $\cP_{w,q}$, of $\F_q$-representable matroids
having pathwidth at most $w$, is minor-closed and has finitely many 
excluded minors. As a relatively easy exercise, we show that the 
list of excluded minors for $\cP_{1,q}$ consists of\footnote{In this paper,
we take the connectivity function of a matroid $M$ with ground set $E$ 
and rank function $r$ to be $\l_M(X) = r(X) + r(E - X) - r(E)$
for $X \subset E$. Therefore, what we consider to be matroids of 
pathwidth one would be matroids of pathwidth two in \cite{GGW06}, 
\cite{hall07}.} 
$U_{2,4}$, $M(K_4)$, $M(K_{2,3})$ and $M^*(K_{2,3})$. 
Unfortunately, the problem of finding excluded-minor characterizations 
of $\cP_{w,q}$ for $w > 1$ becomes difficult very quickly. 
We give a list of excluded minors for $\cP_{2,q}$, 
which is probably not complete. 

The rest of the paper is organized as follows. In 
Section~\ref{prelim_section}, we lay down the definitions and notation 
used in the paper.
In Section~\ref{NP_section}, we prove that, for any fixed field $\F$,
the problem of computing the pathwidth of an $\F$-representable matroid 
is NP-hard, and therefore, so is the problem of computing the
trellis-width of a linear code over $\F$.
Finally, in Section~\ref{bounded_pw_section}, we consider
the class of matroids $\cP_{w,q}$. We give the complete lists of
excluded minors for $\cP_{1,q}$ and the corresponding family
of linear codes over $\F_q$ having trellis-width at most one. We
also give a partial list of excluded minors for  $\cP_{2,q}$.

\section{Preliminaries\label{prelim_section}}

We assume familiarity with the basic definitions and
notation of matroid theory, as expounded by Oxley \cite{oxley}. 
The main results and proofs in this paper will be given
in the language of matroid theory, rather than that of coding
theory, as it is easier to do so. 
However, as our results may be of some interest to coding theorists,
we make an effort in this section to provide the vocabulary necessary 
to translate the language of matroid theory into that of coding theory. 
Definitions of coding-theoretic terms not explicitly defined here 
can be found in any text on coding theory (\emph{e.g.}, \cite{sloane}).

\subsection{Codes and their Associated Matroids\label{codes_matroids_section}}
Let $\cC$ be a linear code of length $n$ over the finite field $\F_q
= GF(q)$. The dimension of $\cC$ is denoted by $\dim(\cC)$, and
the coordinates of $\cC$ are indexed by the integers from the
set $[n] = \{1,2,\ldots,n\}$ as usual.
We will also associate with the coordinates of $\cC$ a set, $E(\cC)$,
of \emph{coordinate labels}, so that there is a bijection 
$\alpha_\cC: [n] \rightarrow E(\cC)$. The \emph{label sequence}
of $\cC$ is defined to be the $n$-tuple $(\a_1, \a_2, \ldots, \a_n)$, 
where $\a_i = \alpha_\cC(i)$. For notational convenience, we 
will simply let $\alpha_\cC$ denote the label sequence of $\cC$.
Unless specified otherwise (as in the case of code minors
and duals below), we will, by default, set $E(\cC)$ to be $[n]$, 
and $\alpha_\cC$ to be the $n$-tuple $(1,2,3,\ldots,n)$. In such a
case, the label of each coordinate is the same as its index.

Given a code $\cC$ over $\F_q$, specified by a generator matrix $G$,
we define its \emph{associated matroid} $M(\cC)$ to be the vector
matroid, $M[G]$, of $G$. We identify the ground set of $M(\cC)$ with 
$E(\cC)$. Note that if $G$ and $G'$ are distinct generator matrices of
the code $\cC$, then $M[G] = M[G']$, and hence, $M(\cC)$ is independent
of the choice of generator matrix. Thus, any generator matrix
of $\cC$ is an $\F_q$-representation of $M(\cC)$.

Conversely, if $M$ is an $\F_q$-representable matroid, 
and $G$ is an $\F_q$-representation of $M$, then
$M = M(\cC)$ for the code $\cC$ generated by $G$. Thus,
each $\F_q$-representable matroid is associated with some code $\cC$
over $\F_q$.

For any code $\cC$, the dual code, $\cC^\perp$, 
is specified to have the same label sequence as $\cC$, \emph{i.e.},
$\alpha_{\cC^\perp} = \alpha_\cC$. It is a particularly nice
fact \cite[Theorem~2.2.8]{oxley} that the matroids 
associated with $\cC$ and $\cC^\perp$
are dual to each other, \emph{i.e.}, $M(\cC^\perp) = 
(M(\cC))^* \define M^*(\cC)$. 

Given a $J \subset E(\cC)$, we will denote by $\cC \punc J$ 
(resp.\ $\cC \sh J$) the code obtained from $\cC$ by puncturing 
(resp.\ shortening at) those coordinates having labels in $J$. 
Thus, $\cC \sh J = (\cC^\perp \punc J)^\perp$. 
A \emph{minor} of $\cC$ is a code of the form $\cC / X \punc Y$ for 
disjoint subsets $X,Y \subset E(\cC)$. A minor of $\cC$
that is not $\cC$ itself is called a \emph{proper minor} of $\cC$.
The coordinates of a minor of $\cC$ retain their labels from $E(\cC)$. 
More precisely, we set $E(\cC / X \punc Y) = E(\cC) - (X \cup Y)$, and take 
the label sequence of $\cC / X \punc Y$ to be the $(n-|X \cup Y|)$-tuple 
obtained from $\alpha_\cC = (\a_1,\a_2,\ldots,\a_n)$ 
by simply removing those entries that are in $X \cup Y$. 
The operations of puncturing and shortening
correspond to the matroid-theoretic operations of deletion and
contraction, respectively: for $J \subset E(\cC)$,
$$
M(\cC \punc J) = M(\cC) \del J \ \ \ \text{and} \ \ \
M(\cC \sh J) = M(\cC) \con J.
$$

We will find it convenient to use $\cC|_J$ to denote the restriction 
of $\cC$ to the coordinates with labels in $J$, \emph{i.e.}, 
$\cC|_J = \cC \punc J^c$, where $J^c$ denotes the set difference
$E(\cC) - J$. This allows us to express the rank function, 
$r:\ E(\cC) \rightarrow \Z$, of the matroid $M(\cC)$ as follows: 
for $J \subset E(\cC)$, $r(J) = \dim(\cC|_J)$.

Two length-$n$ linear codes $\cC$ and $\cC'$ over $\F_q$ 
are defined to be \emph{equivalent} if there is an $n \times n$ 
permutation matrix $\Pi$ and an invertible $n \times n$ diagonal
matrix $\Delta$, such that $\cC'$ is the image of $\cC$ under
the vector space isomorphism $\phi: \F_q^n \rightarrow \F_q^n$ 
defined by $\phi(\x) = (\Pi \Delta) \x$. Informally, $\cC'$ is 
equivalent to $\cC$ if $\cC'$ can be obtained by first multiplying
the coordinates of $\cC$ by some nonzero elements of $\F_q$,
and then applying a coordinate permutation. 
In such a case, we write $\cC \equiv \cC'$. 
The equivalence class of codes equivalent
to $\cC$ will be denoted by $[\cC]$. It is clear that if
codes $\cC$ and $\cC'$ are equivalent, then their associated
matroids are isomorphic. 

We remark that code equivalence has been defined above according to the 
coding-theoretic convention. Note that, under this definition,
if $\cC'$ is obtained by applying an automorphism of the 
field $\F_q$ to $\cC$, then $\cC$ and $\cC'$ would in 
general be considered to be inequivalent. 

A family, $\mfC$, of codes over $\F_q$ is said to be
\emph{minor-closed} if, for each $\cC \in \mfC$, any code equivalent
to a minor of $\cC$ is also in $\mfC$. A code, $\cD$, over $\F_q$
is said to be an \emph{excluded minor} for a minor-closed family $\mfC$, 
if $\cD \notin \mfC$, but every proper minor of $\cD$ is in $\mfC$. 
It is easily verified that if $\mfC$ is a minor-closed
family, then a code $\cC$ is in $\mfC$ iff no minor of $\cC$ 
is an excluded minor for $\mfC$. 

Given a collection, $\cM$, of $\F_q$-representable matroids,
define the code family
\begin{equation}
\mfC(\cM) = \{\cC:\ \cC \text{ is a linear code over $\F_q$ such that }
                               M(\cC) \in \cM \}.
\label{CM_def}
\end{equation}
Evidently, if $\cM$ is a minor-closed class of $\F_q$-representable
matroids, then $\mfC(\cM)$ is also minor-closed. 
In this case, if $\cF$ is the set of all excluded minors for $\cM$, then 
$\mfC(\cF)$ is the set of all excluded minors for $\mfC(\cM)$.

\subsection{Pathwidth, Trellis-width and Branchwidth\label{pw_section}}

The definitions in this section rely on the notion of the
connectivity function of a matroid. Let $M$ be a matroid with
ground set $E(M)$ and rank function $r_M$. 
Its \emph{connectivity function}, $\l_M$, is defined by 
$\l_M(X) = r_M(X) + r_M(E(M)-X) - r_M(E(M))$ for $X \subset E(M)$. 
Note that $\l_M(X) = \l_M(E(M) - X)$, and $\l_M(E(M)) 
= \l_M(\emptyset) = 0$.
It should be pointed out that in the matroid theory literature, 
the prevalent definition of the connectivity function adds a `+1' 
to the expression we have given. We have chosen not to follow suit in order 
that we can give a minimum-fuss definition of trellis-width below.

The connectivity function is non-negative, \emph{i.e.},
$\l_M(X) \geq 0$ for all $X \subset E(M)$, and submodular
\emph{i.e.}, $\l_M(X \cup Y) + \l_M(X \cap Y) \leq \l_M(X) + \l_M(Y)$
for all $X,Y \subset E(M)$. It is monotone 
under the action of taking minors --- if $N$ is a minor of $M$,
then for all $X \subset E(N)$, $\l_N(X) \leq \l_M(X)$.
Finally, the connectivity function of a matroid is identical
to that of its dual, \emph{i.e.}, $\l_M(X) = \l_{M^*}(X)$ for all
$X \subset E(M)$.

Given an ordering $(e_1,e_2,\ldots,e_n)$ of the elements of $M$, 
define the \emph{width} of the ordering to be 
$w_M(e_1,e_2,\ldots,e_n) = \max_{i \in [n]} \l_M(e_1,e_2,\ldots,e_i)$.
(For simplicity of notation, we use $\l_M(e_1,e_2,\ldots,e_i)$ instead
of $\l_M(\{e_1,e_2,\ldots,e_i\})$.) 
The \emph{pathwidth} of $M$ is defined as 
$\pw(M) = \min w_M(e_1,e_2,\ldots,e_n)$,
the minimum being taken over all orderings $(e_1,e_2,\ldots,e_n)$ of $E(M)$.
An ordering $(e_1,e_2,\ldots,e_n)$ of $E(M)$ such that 
$w_M(e_1,e_2,\ldots,e_n) = \pw(M)$
is called an \emph{optimal} ordering.

Since $\l_M \equiv \l_{M^*}$, it is clear that $\pw(M) = \pw(M^*)$.
Another useful and easily verifiable property of pathwidth is 
that, for matroids $M_1$ and $M_2$, the pathwidth of their
direct sum, $\pw(M_1 \oplus M_2)$, equals $\max\{\pw(M_1),\pw(M_2)\}$.
The property of pathwidth most important for our purposes is
stated in the following lemma. \\[-6pt]

\begin{lemma}
If $N$ is a minor of $M$, then $\pw(N) \leq \pw(M)$.
\label{minor_pathwidth_lemma}
\end{lemma}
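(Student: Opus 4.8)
The plan is to prove the statement by induction on the number of elements removed to obtain $N$ from $M$, which reduces everything to the two base cases of a single deletion $N = M \del e$ and a single contraction $N = M \con e$. Since $\pw(M) = \pw(M^*)$ and deletion and contraction are interchanged under duality (so $M \con e = (M^* \del e)^*$), it actually suffices to handle just one of these, say deletion; the contraction case then follows by applying the deletion result to $M^*$ and invoking invariance of pathwidth under duality.

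So the core task is: given $N = M \del e$ on ground set $E(M) - \{e\}$, show $\pw(N) \leq \pw(M)$. First I would take an optimal ordering $(e_1, e_2, \ldots, e_n)$ of $E(M)$, so that $w_M(e_1,\ldots,e_n) = \pw(M)$. Say $e = e_j$. Delete $e_j$ from this sequence to obtain an ordering $(f_1, f_2, \ldots, f_{n-1})$ of $E(N)$, where $f_i = e_i$ for $i < j$ and $f_i = e_{i+1}$ for $i \geq j$. The goal is to bound $w_N(f_1,\ldots,f_{n-1})$, i.e., to show $\l_N(\{f_1,\ldots,f_i\}) \leq \pw(M)$ for each $i$. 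The key point is monotonicity of the connectivity function under minors, which is stated in the excerpt: since $N$ is a minor of $M$, we have $\l_N(X) \leq \l_M(X)$ for every $X \subseteq E(N)$. For $i < j$, the initial segment $\{f_1,\ldots,f_i\} = \{e_1,\ldots,e_i\}$, so $\l_N(\{f_1,\ldots,f_i\}) \leq \l_M(\{e_1,\ldots,e_i\}) \leq \pw(M)$. For $i \geq j$, the initial segment $\{f_1,\ldots,f_i\} = \{e_1,\ldots,e_{i+1}\} - \{e_j\}$, which is a subset of $E(N)$, so again $\l_N(\{f_1,\ldots,f_i\}) \leq \l_M(\{e_1,\ldots,e_{i+1}\} - \{e_j\})$; but this last set is \emph{not} necessarily an initial segment of the $M$-ordering, so I cannot directly bound it by $\pw(M)$. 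To get around this, I would instead compare $\{f_1,\ldots,f_i\}$ directly with the initial segment $\{e_1,\ldots,e_{i+1}\}$ of $M$ and use that $\l_N(\{f_1,\ldots,f_i\}) = \l_N(\{e_1,\ldots,e_{i+1}\} \cap E(N)) \leq \l_M(\{e_1,\ldots,e_{i+1}\})$ — the cleanest route is simply: $\{f_1,\ldots,f_i\}$, viewed as a subset of $E(N)$, satisfies $\l_N(\{f_1,\ldots,f_i\}) \le \l_{M}(\{f_1,\ldots,f_i\})$, and then bound $\l_M(\{f_1,\ldots,f_i\})$ by relating it to $\l_M$ of a nearby initial segment.

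The technical heart, and the step I expect to be the main obstacle, is precisely this last comparison: showing that $\l_M(S)$ for $S = \{e_1,\ldots,e_{i+1}\} - \{e_j\}$ is at most the maximum over initial segments of the $M$-ordering. This is where submodularity enters. Writing $A = \{e_1,\ldots,e_{i+1}\}$ (an initial segment) and $B = \{e_1,\ldots,e_{j}\} = \{e_1,\ldots,e_{j-1}\} \cup \{e_j\}$, we have $S = A - \{e_j\} = A \cap (E(M) - \{e_j\})$ — hmm, that still involves a non-initial-segment complement. The better decomposition: let $A' = \{e_1,\ldots,e_{j-1}\}$ and $A = \{e_1,\ldots,e_{i+1}\}$, both initial segments with $A' \subseteq A$; then $S = A' \cup (A - \{e_1,\ldots,e_j\})$, and $A' \cap (\text{the rest}) = \emptyset$. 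Applying submodularity in the form $\l_M(X \cup Y) + \l_M(X \cap Y) \le \l_M(X) + \l_M(Y)$ with a suitable choice (e.g. $X = A'$, and $Y$ a set whose union with $A'$ is $S$ and whose value is controlled, such as $Y = S \cup \{e_j\} = A$), one obtains $\l_M(S) + \l_M(A') \ge$ or $\le$ the right combination — I would need to check the direction carefully, but the point is that $\l_M(S)$ gets sandwiched between connectivity values of two honest initial segments $A'$ and $A$, each of which is at most $\pw(M)$, giving $\l_M(S) \le \pw(M) + (\text{something} - \text{something})$. Actually the slickest argument avoids this entirely: since $\l_N(X) \le \l_M(X)$ for all $X \subseteq E(N)$ regardless of whether $X$ is an initial segment, and since deleting $e_j$ from an optimal $M$-ordering produces an $N$-ordering all of whose initial segments are of the form $\{e_1,\ldots,e_k\} \cap E(N)$, I would argue: if $e_j \notin \{e_1,\ldots,e_k\}$ this is just $\{e_1,\ldots,e_k\}$, handled directly; if $e_j \in \{e_1,\ldots,e_k\}$, then $\{e_1,\ldots,e_k\} \cap E(N) = \{e_1,\ldots,e_k\} - \{e_j\}$, and here I use that $\l_M$ restricted to $E(N)$ \emph{is} the connectivity function of a minor evaluated on initial segments of a valid ordering of $M \del e_j$ — so I must bound $\l_M(\{e_1,\ldots,e_k\} - \{e_j\})$ using submodularity against $\{e_1,\ldots,e_k\}$ and $\{e_1,\ldots,e_{j-1}\}$. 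Once that inequality is in hand (it is the one genuinely computational lemma, a couple of lines of submodular manipulation), the proof assembles immediately: every initial segment of the derived $N$-ordering has $\l_N$-value at most $\pw(M)$, hence $\pw(N) \le w_N(f_1,\ldots,f_{n-1}) \le \pw(M)$, completing the deletion case and, via duality, the contraction case and the full induction.

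\Qed
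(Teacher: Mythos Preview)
Your overall structure is right --- reduce to a single deletion or contraction, take an optimal ordering of $M$, delete the element, and bound the width of the induced ordering of $N$ --- and this is exactly what the paper does. The case $i < j$ is handled correctly via $\l_N(X) \leq \l_M(X)$.

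The gap is in the case $i \geq j$. You correctly observe that the relevant set $S = \{e_1,\ldots,e_{i+1}\} - \{e_j\}$ is not an initial segment of the $M$-ordering, and you then propose to bound $\l_M(S)$ by submodularity against the two initial segments $\{e_1,\ldots,e_{j-1}\}$ and $\{e_1,\ldots,e_{i+1}\}$. This does not close: the only submodular inequalities that put $S$ on the ``upper-bounded'' side (e.g.\ taking $X = \{e_1,\ldots,e_{i+1}\}$ and $Y = E(M)-\{e_j\}$, so $X\cap Y = S$) yield $\l_M(S) \leq \l_M(\{e_1,\ldots,e_{i+1}\}) + \l_M(\{e_j\}) \leq \pw(M) + 1$, which is off by one. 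The pairings you suggest with $\{e_1,\ldots,e_{j-1}\}$ place $S$ on the wrong side of the inequality and give lower bounds instead. So the ``couple of lines of submodular manipulation'' you defer to do not exist in the form you need.

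The missing idea is much simpler than submodularity: use the \emph{symmetry} of the connectivity function in $N$, not in $M$. For $i \geq j$ one has $E(N) - S = \{e_{i+2},\ldots,e_n\}$, which \emph{is} a genuine tail (hence initial-segment complement) of the $M$-ordering and lies entirely in $E(N)$. Then
\[
\l_N(S) \;=\; \l_N(\{e_{i+2},\ldots,e_n\}) \;\leq\; \l_M(\{e_{i+2},\ldots,e_n\}) \;=\; \l_M(\{e_1,\ldots,e_{i+1}\}) \;\leq\; \pw(M),
\]
and you are done. This is precisely the paper's argument; it never needs to bound $\l_M(S)$ at all. Your detour through duality to reduce contraction to deletion is fine but unnecessary, since the symmetry argument above works verbatim for both $M\del e_j$ and $M\con e_j$.
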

\begin{proof}
Let $(e_1,\ldots,e_n)$ be an optimal ordering of $E(M)$.
It is enough to show the
result in the case when $N = M \del e_i$ or $N = M \con e_i$ 
for some $i \in [n]$. In such a case, consider the ordering 
$(e_1,\ldots,e_{i-1},e_{i+1},\ldots,e_n)$ of $E(N)$. For 
$j \in \{1,\ldots,i-1\}$, we have 
$\l_N(e_1,\ldots,e_j) \leq \l_M(e_1,\ldots,e_j)$.
For $j \in \{i+1,\ldots,n\}$, we have
\begin{eqnarray*}
\l_N(e_1,\ldots,e_{i-1},e_{i+1},\ldots,e_j) 
&=& \l_N(e_{j+1},\ldots,e_n) \\
&\leq& \l_M(e_{j+1},\ldots,e_n) \ = \  \l_M(e_1,\ldots,e_{j}).
\end{eqnarray*}
It follows that $w_N(e_1,\ldots,e_{i-1},e_{i+1},\ldots,e_n) 
\leq w_M(e_1,\ldots,e_n) = \pw(M)$, and hence, $\pw(N) \leq \pw(M)$.
\end{proof} \mbox{}\\[-6pt]

The \emph{trellis-width} of a linear code $\cC$ over $\F_q$
is defined to be $\tw(\cC) = \pw(M(\cC))$. For a discussion
of the motivation and practical implications of this 
definition, we refer the reader to \cite[Section~5]{vardy}.

The pathwidth of a matroid is an upper bound on its branchwidth,
a more well known measure of matroid complexity. The branchwidth
of a matroid is defined via cubic trees. A \emph{cubic tree} is 
a tree in which the degree of any vertex is either one
or three. The vertices of degree one are called \emph{leaves}.
A \emph{branch-decomposition} of a matroid $M$ is a cubic tree,
$T$, with $|E(M)|$ leaves, labelled in a one-to-one fashion by the
elements of $M$. Each edge $e$ of such a branch-decomposition $T$ 
connects two subtrees of $T$, so $T \del e$ has two components. We say that
edge $e$ \emph{displays} a subset $X \subset E(M)$ if $X$ is the set of
labels of leaves of one of the components of $T \del e$. The \emph{width}
of an edge $e$ of $T$ is defined to be $\l_M(X)$, where $X$ is one
of the label sets displayed by $e$. The \emph{width} of $T$
is the maximum among the widths of its edges. 

\begin{figure}[t]
\centering{\epsfig{file=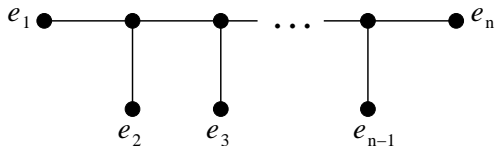, width=6.5cm}}
\caption{A branch-decomposition of $M$ having width equal to
$w_M(e_1,e_2,\ldots,e_n)$.}
\label{branch_decomp_fig}
\end{figure}

The \emph{branchwidth} of $M$ is the minimum among the widths
of all its branch-decompositions. Note that if $T$ is the 
branch-decomposition of $M$ shown in Figure~\ref{branch_decomp_fig},
then the width of $T$ is precisely $w_M(e_1,e_2,\ldots,e_n)$. Indeed,
the width of any edge of $T$ is either $\l_M(e_i)$ or 
$\l_M(e_1,\ldots,e_i)$ for some $i \in [n]$. Now, for any $i \in [n]$,
\begin{eqnarray*}
\l_M(e_1,\ldots,e_{i-1}) + \l_M(e_1,\ldots,e_i)
&=& \l_M(e_i,e_{i+1},\ldots,e_n) + \l_M(e_1,\ldots,e_i) \\
&\geq& \l_M(E(M)) + \l_M(e_i) \ = \ \l_M(e_i),
\end{eqnarray*}
the inequality above arising from the submodularity of $\l_M$.
Since $\l_M(e_i) \in \{0,1\}$, either 
$\l_M(e_1,\ldots,e_{i-1})$ or $\l_M(e_1,\ldots,e_i)$
is at least as large as $\l_M(e_i)$. Therefore, the width of 
$T$ is given by $\max_{i \in [n]} \l_M(e_1,\ldots,e_i)
= w_M(e_1,\ldots,e_n)$. It follows that the branchwidth of $M$ 
is upper-bounded by $\pw(M)$.

\section{NP-Hardness of Matroid Pathwidth and 
Code Trellis-Width\label{NP_section}}

In this section, we prove that for any fixed field $\F$, the
problem of computing the pathwidth of an $\F$-representable 
matroid $M$, given a representation of $M$ over $\F$, is NP-hard.
We accomplish this by reduction from the known NP-hard
problem of computing the pathwidth of a graph \cite{arnborg}, \cite{bod93}.

The notion of graph pathwidth was introduced by Robertson and Seymour
in \cite{RS-I}. Let $\cG$ be a graph with vertex set $V$.
An ordered collection $\cV = (V_1,\ldots,V_t)$, $t \geq 1$, of subsets of $V$ 
is called a \emph{path-decomposition} of $\cG$, if
\begin{itemize}
\item[(i)] $\bigcup_{i=1}^t V_i = V$;
\item[(ii)] for each pair of adjacent vertices $u,v \in V$, we have
$\{u,v\} \subset V_i$ for some $i \in [t]$; and
\item[(iii)] for $1 \leq i < j < k \leq t$, $V_i \cap V_k \subset V_j$.
\end{itemize}
The \emph{width} of such a path-decomposition $\cV$
is defined to be $w_{\cG}(\cV) = \max_{i \in [t]} |V_i| - 1$.
The \emph{pathwidth} of $\cG$, denoted by $\pw(\cG)$, is the 
minimum among the widths of all its path-decompositions. 
A path-decomposition $\cV$ such that $w_{\cG}(\cV) = \pw(\cG)$
is called an \emph{optimal} path-decomposition of $\cG$.

Let $\F$ be an arbitrary field. Given a graph $\cG$ with vertex set $V$, 
our aim is to produce, in time polynomial in $|V|$, a matrix $A$
over $\F$ such that $\pw(\cG)$ can be directly computed from $\pw(M[A])$. 
The NP-hardness of computing graph pathwidth then implies the 
NP-hardness of computing the pathwidth of an $\F$-representable matroid.

The obvious idea of taking $A$ to be a representation of
the cycle matroid of $\cG$ does not work. As observed by 
Robertson and Seymour \cite{RS-I}, trees can have arbitrarily 
large pathwidth; however, the cycle matroid of any tree is 
$U_{n,n}$ for some $n$, and $\pw(U_{n,n})= 0$.
What actually turns out to work is to take $A$ to be a 
representation of the cycle matroid of a certain graph 
constructible from $\cG$ in polynomial time, as we describe next.

\begin{figure}[t]
\centering{\epsfig{file=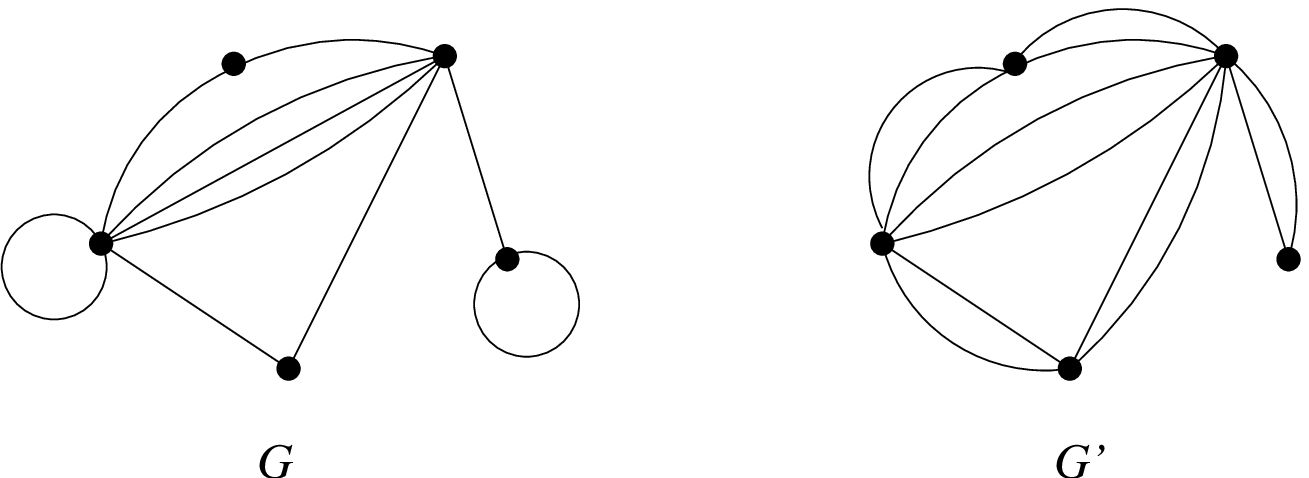, width=7.5cm}}
\caption{Construction of $\cG'$ from $\cG$.}
\label{Gprime}
\end{figure}

Let $\cG'$ be a graph defined on the same vertex set, $V$, as $\cG$, 
having the following properties (see Figure~\ref{Gprime}):
\begin{itemize}
\item[(P1)] $\cG'$ is loopless;
\item[(P2)] a pair of distinct vertices is adjacent in $\cG'$
iff it is adjacent in $\cG$; and
\item[(P3)] in $\cG'$, there are exactly two edges between 
each pair of adjacent vertices.
\end{itemize}
It is evident from the definition that 
$(V_1,\ldots,V_t)$ is a path-decomposition of $\cG$ iff it is
a path-decomposition of $\cG'$. Therefore, $\pw(\cG') = \pw(\cG)$.

Define $\ocG$ to be the graph obtained by adding an extra vertex, 
henceforth denoted by $x$,
to $\cG'$, along with a pair of parallel edges from $x$ to 
each $v \in V$ (see Figure~\ref{Gbar}). Clearly, $\ocG$
is constructible directly from $\cG$ in $O(|V|^2)$ time.
But more importantly, the pathwidth of the cycle matroid, $M(\ocG)$, 
of $\ocG$ relates very simply to the pathwidth of $\cG$, as
made precise by the following proposition. \\[-6pt]

\begin{proposition}
$\pw(M(\ocG)) = \pw(\cG) + 1$. \\[-6pt]
\label{pw_prop}
\end{proposition}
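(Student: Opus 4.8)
The plan is to prove the two inequalities $\pw(M(\ocG)) \leq \pw(\cG) + 1$ and $\pw(M(\ocG)) \geq \pw(\cG) + 1$ separately, using the correspondence between path-decompositions of $\ocG$ and orderings of the edges of $\ocG$. The key bridge is the following observation: for any graph $\cH$ and any subset $S$ of its edges, if $V(S)$ denotes the set of vertices incident to an edge in $S$ and $V(S^c)$ the vertices incident to an edge outside $S$, then the connectivity $\l_{M(\cH)}(S)$ of $S$ in the cycle matroid equals $|V(S) \cap V(S^c)| - (\text{number of components of }\cH\text{ meeting both }S\text{ and }S^c)$, or something close to it; more usefully, $\l_{M(\cH)}(S) = r(S) + r(S^c) - r(E(\cH))$, and for a connected graph $\cH$ on vertex set $W$ the rank of an edge set $F$ is $|W| - c(F)$ where $c(F)$ is the number of connected components of the subgraph $(W,F)$. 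Because $\ocG$ is connected (every vertex is joined to $x$), this gives a clean handle: $\l_{M(\ocG)}(S) = |V| + 1 - c_S - c_{S^c} + c_{S\cup S^c}$-type bookkeeping, which I would reduce to counting vertices of $\ocG$ that are "touched by both sides" of the cut induced by $S$.

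For the upper bound, I would start from an optimal path-decomposition $(V_1,\ldots,V_t)$ of $\cG$ (equivalently of $\cG'$), add the new vertex $x$ to every bag to get $(V_1 \cup \{x\}, \ldots, V_t \cup \{x\})$, and then convert this into an ordering of the edges of $\ocG$: process the bags left to right, and within the transition from one bag to the next, insert all edges of $\ocG$ whose endpoints have both "appeared by now" but whose insertion is forced at this stage. The standard fact that $\pw$ of a graph equals the "vertex separation number" (and its edge-ordering reformulation) is what makes this work; I would argue that the width of the resulting edge-ordering of $M(\ocG)$, i.e. $\max_i \l_{M(\ocG)}(e_1,\ldots,e_i)$, is at most $\max_i |V_i \cup \{x\}| - 1 = \pw(\cG) + 1$, since at each prefix the set of vertices "straddling the cut" is contained in some augmented bag.

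For the lower bound, I would take an optimal ordering $(e_1,\ldots,e_m)$ of $E(\ocG) = E(M(\ocG))$ and manufacture a path-decomposition of $\cG$ from it: let $V_i$ be the set of vertices $v \in V$ that are incident both to some edge among $e_1,\ldots,e_i$ and to some edge among $e_{i+1},\ldots,e_m$, together with possibly a controlled correction term. The doubled-edge properties (P2), (P3) and the doubled spokes to $x$ are exactly what force $x$ (or a vertex playing its role) to contribute a $+1$ to every relevant prefix connectivity, so that $|V_i| \leq \l_{M(\ocG)}(e_1,\ldots,e_i)$, giving $w_\cG(\cV) \leq \pw(M(\ocG)) - 1$ after checking that $\cV$ is a genuine path-decomposition (conditions (i)–(iii), with (iii) following from the "interval" nature of when a vertex is active in the ordering).

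The main obstacle I expect is the exact off-by-one/connectivity bookkeeping: pinning down precisely why the doubled edges (rather than single edges) are needed — namely, that a single edge between $u$ and $v$ can be "consumed" without either of $u,v$ being counted on both sides of a cut, whereas with two parallel edges any prefix that separates them must count both endpoints — and verifying that the extra apex $x$ with doubled spokes reliably adds exactly one to the pathwidth rather than zero or two. Getting condition (ii) of the path-decomposition (adjacent vertices share a bag) to come out correctly from the edge-ordering, and simultaneously controlling the bag sizes, is the delicate point; everything else is routine once the connectivity-function formula for cycle matroids of connected graphs is in hand.
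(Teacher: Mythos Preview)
Your upper-bound plan matches the paper's: start from an optimal path-decomposition $(V_1,\ldots,V_t)$ of $\cG$, and read off an edge ordering of $\ocG$ whose width is at most $\max_j |V_j| = \pw(\cG)+1$. That part is fine.

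The lower bound, however, has a genuine gap. Your key inequality $|V_i| \leq \l_{M(\ocG)}(e_1,\ldots,e_i)$ is not a matter of bookkeeping; it is false in general. For a connected graph $\ocG$ and a prefix $S$, the cycle-matroid identity is
\[
\l_{M(\ocG)}(S) \;=\; |V(S)\cap V(S^c)| \;+\; 1 \;-\; k(S) \;-\; k(S^c),
\]
where $k(\cdot)$ counts the nontrivial components of the induced subgraph. So even when $x$ straddles the cut, your $|V_i| = |V(S)\cap V(S^c)| - 1$ equals $\l_{M(\ocG)}(S) + (k(S)-1) + (k(S^c)-1)$. The inequality $|V_i|\le \l_{M(\ocG)}(S)$ therefore holds \emph{only} when both the prefix and the suffix subgraphs are connected. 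The doubled edges and doubled spokes to $x$ guarantee that your $(V_i)$ is a genuine path-decomposition of $\cG$ (conditions (i)--(iii) do go through for the reasons you indicate), but they do \emph{not} force connectedness of an arbitrary prefix: for instance, two vertex-disjoint $\cG$-edges placed first already give $k(S)=2$ and $|V_i| > \l_{M(\ocG)}(S)$.

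This is exactly the obstacle the paper's proof works hard to overcome. Rather than bounding $|V_i|$ for an arbitrary optimal ordering, the paper (i) designs a Re-ordering Algorithm that converts any normal ordering into one inducing a partition $(L_1,A_1,B_1,R_1,\ldots,L_t,A_t,B_t,R_t)$ with $L_j\subset L_x$, $R_j\subset R_x$, etc., (ii) proves the re-ordering does not increase width (this is the substantive Lemma, and the argument is delicate --- it uses the triangles $(l_{xu},l_{uv},l_{xv})$ in $\ocG$ in an essential way), and (iii) shows that at the specific cuts $Y_j = \bigcup_{i<j}(L_i\cup A_i\cup B_i\cup R_i)\cup L_j\cup A_j$, both $\ocG[Y_j]$ and $\ocG[Y_j^c]$ are connected, so that $\l_{M}(Y_j) = |V_j|-1$ exactly. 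This yields a path-decomposition of $\ocG$ (not $\cG$) of width at most $\pw(M(\ocG))$; a separate, short graph-theoretic lemma then gives $\pw(\ocG)=\pw(\cG)+1$.

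In short: the ``connectivity bookkeeping'' you flag as the main obstacle is not an off-by-one issue but the entire content of the lower bound, and it is handled in the paper by a nontrivial re-ordering argument that your plan does not supply.
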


Before proving the result, we present some of its implications. 
For any field $\F$, $M(\ocG)$ is $\F$-representable. Indeed, 
if $D(\ocG)$ is any directed graph obtained by arbitrarily
assigning orientations to the edges of $\ocG$, then the 
vertex-arc incidence matrix of $D(\ocG)$ is an $\F$-representation
of $M(\ocG)$ \cite[Proposition~5.1.2]{oxley}. It is easily verified
that such an $\F$-representation of $M(\ocG)$ can be constructed
directly from $\cG$ in $O(|V|^3)$ time. Now, suppose that there
were a polynomial-time algorithm for computing the pathwidth
of an arbitrary $\F$-representable matroid, given an $\F$-representation
for it. Then, given any graph $\cG$, we can construct
an $\F$-representation, $A$, of $M(\ocG)$, and then compute the
pathwidth of $M[A] = M(\ocG)$, all in polynomial time. Therefore,
by Proposition~\ref{pw_prop}, we have a polynomial-time algorithm to
compute the pathwidth of $\cG$. However, the graph pathwidth problem 
is NP-hard. So, if there exists a polynomial-time algorithm for it,
then we must have $P=NP$. This implies the following result. \\[-6pt]

\begin{theorem}
Let $\F$ be a fixed field. The problem of computing the pathwidth of $M[A]$,
for an arbitrary matrix $A$ over $\F$, is NP-hard.
\\[-6pt]
\label{pw_NP_thm}
\end{theorem}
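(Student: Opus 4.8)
The plan is to deduce Theorem~\ref{pw_NP_thm} from Proposition~\ref{pw_prop} by means of a polynomial-time many-one reduction from the \textsc{Graph Pathwidth} problem, which is known to be NP-hard \cite{arnborg}, \cite{bod93}. In fact, the body of the argument has already been sketched in the paragraph preceding the statement; what remains is to assemble the pieces carefully and verify the polynomiality of each construction step. So the structure of the proof would be: first, recall the reduction pipeline $\cG \mapsto \cG' \mapsto \ocG \mapsto D(\ocG) \mapsto A$; second, check that each arrow is computable in time polynomial in $|V|$; third, invoke Proposition~\ref{pw_prop} to relate $\pw(M[A])$ to $\pw(\cG)$; and fourth, conclude by the standard argument that a polynomial-time algorithm for matroid pathwidth would yield one for graph pathwidth, forcing $P = NP$.

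In more detail: given a graph $\cG$ on vertex set $V$, form $\cG'$ satisfying (P1)--(P3) — this just means deleting loops, and replacing each adjacent pair by exactly two parallel edges — which takes $O(|V|^2)$ time and, as already noted in the excerpt, preserves pathwidth. Then form $\ocG$ by adjoining a new vertex $x$ joined to every $v \in V$ by a pair of parallel edges; this is again an $O(|V|^2)$-time step. Next, orient the edges of $\ocG$ arbitrarily to obtain a digraph $D(\ocG)$, and let $A$ be its vertex-arc incidence matrix over $\F$; by \cite[Proposition~5.1.2]{oxley} we have $M[A] = M(\ocG)$, and $A$ has $|V|+1$ rows and $O(|V|^2)$ columns, so it is writable in $O(|V|^3)$ time. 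Thus from $\cG$ we produce, in time polynomial in $|V|$, a matrix $A$ over $\F$ with $M[A] = M(\ocG)$.

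Now suppose, for contradiction, that there is an algorithm computing $\pw(M[B])$ for an arbitrary matrix $B$ over $\F$ in time polynomial in the size of $B$. Running it on the matrix $A$ constructed above yields $\pw(M(\ocG))$ in time polynomial in $|V|$, and then Proposition~\ref{pw_prop} gives $\pw(\cG) = \pw(M(\ocG)) - 1$, also computed in polynomial time. This is a polynomial-time algorithm for \textsc{Graph Pathwidth}, which is NP-hard; hence $P = NP$. Equivalently, no such polynomial-time algorithm exists unless $P = NP$, i.e.\ computing the pathwidth of $M[A]$ for an arbitrary matrix $A$ over $\F$ is NP-hard. I do not anticipate a genuine obstacle here, since Proposition~\ref{pw_prop} does all the real work; the only points requiring care are (a) confirming the bit-size of $A$ and the running time of its construction are polynomial (so that the reduction is honest), and (b) noting that everything is uniform in the fixed field $\F$ — in particular $\F$ need not be finite, as the entries of the incidence matrix $A$ lie in the prime subfield and the reduction never depends on the size of $\F$.
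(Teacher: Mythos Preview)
Your proposal is correct and follows essentially the same route as the paper: construct $\ocG$ from $\cG$ in polynomial time, represent $M(\ocG)$ over $\F$ via the vertex-arc incidence matrix of an orientation (in $O(|V|^3)$ time), and then invoke Proposition~\ref{pw_prop} together with the NP-hardness of graph pathwidth. Your write-up is in fact slightly more explicit than the paper's, spelling out the size of $A$ and observing that its entries lie in the prime subfield so that the reduction is uniform in $\F$.
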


As a corollary, we have that computing the trellis-width of a
code is NP-hard. \\[-6pt]

\begin{corollary}
Let $\F$ be a fixed finite field. 
The problem of computing the trellis-width of an arbitrary linear code 
over $\F$, specified by any of its generator matrices, is NP-hard. \\[-6pt]
\label{tw_NP_cor}
\end{corollary}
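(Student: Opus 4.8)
The plan is to deduce this directly from Theorem~\ref{pw_NP_thm} together with the identity $\tw(\cC) = \pw(M(\cC))$, using the observation that an instance of the code trellis-width problem is essentially the same object as an instance of the matroid pathwidth problem. Concretely, suppose we are given a matrix $A$ over the finite field $\F$ and wish to compute $\pw(M[A])$. Let $\cC$ be the linear code over $\F$ whose generator matrix is $A$ (that is, $\cC$ is the rowspace of $A$); note that $A$ need not have full row rank for this to make sense, since by definition any matrix whose rowspace equals $\cC$ is a generator matrix of $\cC$. Then $M(\cC) = M[A]$ by the definition of the associated matroid, and hence $\tw(\cC) = \pw(M(\cC)) = \pw(M[A])$. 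The map sending $A$ to $\cC$ (with $A$ itself serving as the output generator matrix) is trivially computable in polynomial time, and the size of the input matrix $A$ viewed as a matroid representation is, up to a constant depending only on $|\F|$, the same as its size viewed as a code generator matrix. Therefore a polynomial-time algorithm for computing trellis-width of codes over $\F$ would yield a polynomial-time algorithm for computing $\pw(M[A])$ for arbitrary $A$ over $\F$, which by Theorem~\ref{pw_NP_thm} is impossible unless $P = NP$.

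Alternatively --- and this is the route I would actually write out, since it keeps the argument self-contained --- one can simply rerun the reduction underlying Theorem~\ref{pw_NP_thm} at the level of codes. Given a graph $\cG$ with vertex set $V$, build the graph $\ocG$ as in Section~\ref{NP_section}, choose any orientation of its edges, and let $A$ be the vertex-arc incidence matrix of the resulting directed graph; since its entries lie in $\{0,1,-1\} \subseteq \F$, this is a valid matrix over $\F$, and it can be written down from $\cG$ in $O(|V|^3)$ time. Let $\cC$ be the code over $\F$ generated by $A$. Then $M(\cC) = M[A] = M(\ocG)$, so by Proposition~\ref{pw_prop},
$$
\tw(\cC) \;=\; \pw(M(\ocG)) \;=\; \pw(\cG) + 1 .
$$
Thus $\cG \mapsto \cC$ is a polynomial-time reduction from the (NP-hard) graph pathwidth problem to the problem of computing the trellis-width of a linear code over $\F$, from which the NP-hardness of the latter follows.

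There is no real obstacle here; the only points that require a word of care are (i) that the NP-hardness reduction of Theorem~\ref{pw_NP_thm} goes through verbatim when $\F$ is finite, which it does because the incidence matrix $A$ has $0,\pm 1$ entries regardless of the field, so that $M[A]$ is the cycle matroid $M(\ocG)$ in every case; and (ii) that a matrix of less than full row rank is still legitimately a generator matrix, so that passing between "representation of a matroid" and "generator matrix of a code" incurs no loss and, in particular, $\tw(\cC)$ is well-defined independently of which generator matrix we started from. With these remarks in place, the corollary is immediate.
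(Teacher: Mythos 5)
Your proposal is correct and matches the paper, which offers no separate argument for this corollary: it treats it as immediate from Theorem~\ref{pw_NP_thm} via the identity $\tw(\cC) = \pw(M(\cC))$ and the fact that a generator matrix of $\cC$ is precisely an $\F$-representation of $M(\cC)$ --- exactly your first paragraph. Your second, self-contained route is just the same underlying reduction from graph pathwidth (via $\ocG$ and Proposition~\ref{pw_prop}) restated at the level of codes, so it is not a genuinely different approach, merely a more explicit write-up.
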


\begin{figure}[t]
\centering{\epsfig{file=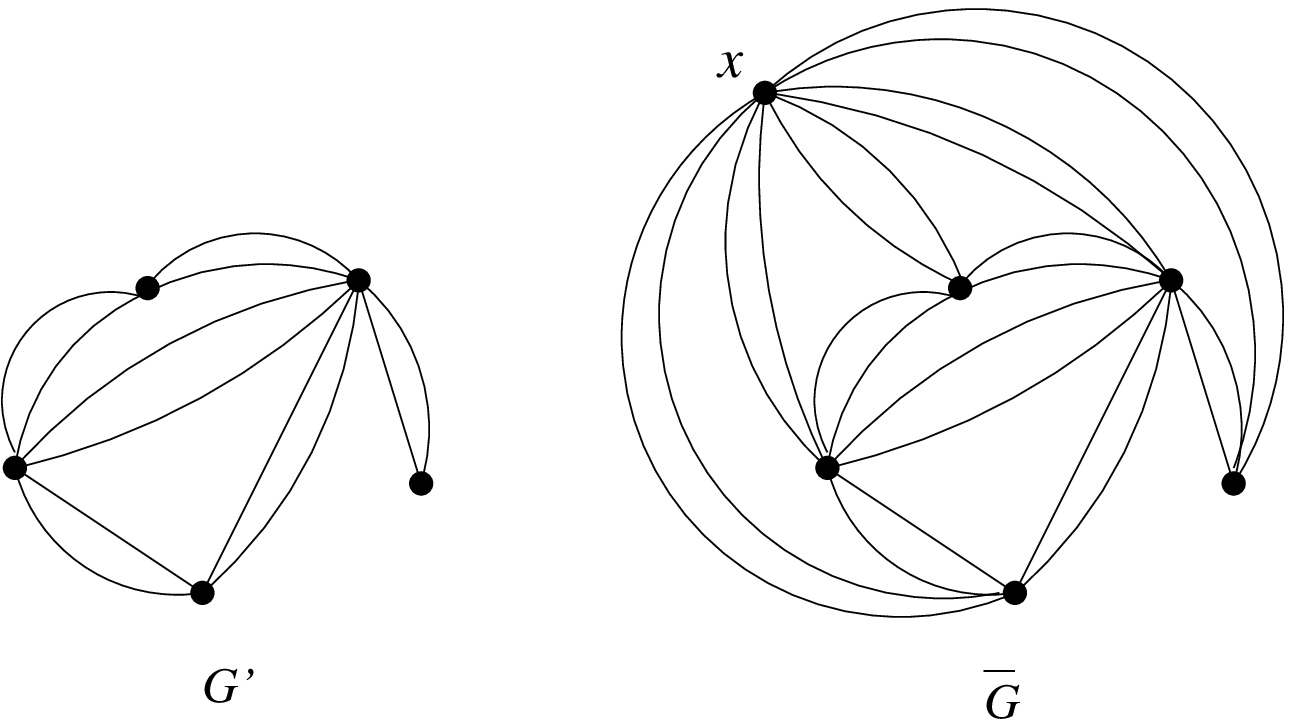, width=7.5cm}}
\caption{Construction of $\ocG$ from $\cG'$.}
\label{Gbar}
\end{figure}

The remainder of this section is devoted to 
the proof of Proposition~\ref{pw_prop}. Since 
$\pw(\cG') = \pw(\cG)$, for the purpose of our proof,
we may assume that $\cG' = \cG$. Thus, from now until the end of
this section, we take $\cG$ to be a loopless graph satisfying 
property (P3) above. Note that $\ocG$ also satisfies (P3). 
For each pair of adjacent vertices $u,v$ in $\cG$ or $\ocG$, 
we denote by $l_{uv}$ and $r_{uv}$ the 
two edges between $u$ and $v$. Let $V$ and $E$ 
denote the sets of vertices and edges of $\cG$, and
let $\oV$ and $\oE$ denote the corresponding sets of $\ocG$.
We thus have $\oV = V \disj \{x\}$, and 
$\oE = E \disj \left(\bigcup_{v \in V} \{l_{xv},r_{xv}\}\right)$.

Set $M = M(\ocG)$, so that $E(M) = \oE$. Note that since $\ocG$
is connected (each $v \in V$ is adjacent to $x$), we have
$\rank(M) = |\oV|-1 = |V|$. 

We will first prove that $\pw(M) \leq \pw(\cG) + 1$.
Let $\cV = (V_1,\ldots,V_t)$ be a path-decomposition of $\cG$.
We need the following fact about $\cV$: for each $j \in [t]$, 
\begin{equation}
\bigcup_{i \leq j} V_i \ \cap \ \bigcup_{k \geq j} V_k
\ = \  V_j.
\label{Vj_eq}
\end{equation}
The above equality follows from the fact 
that a path-decomposition, by definition, has the property that
for $1 \leq i < j < k \leq t$, $V_i \cap V_k \subset V_j$. 

For $j \in [t]$, let $F_j$ be the set of edges of $\cG$
that have both their end-points in $V_j$. By condition (ii) in
the definition of path-decomposition, $\bigcup_{j=1}^t F_j = E$. Now, let
$\overline{F_j} = F_j \cup \left(\bigcup_{v \in V_j} \{l_{xv},r_{xv}\}\right)$,
so that $\bigcup_{j=1}^t \overline{F_j} = \oE$. \\[-6pt]

\begin{Def}
An ordering $(e_1,\ldots,e_n)$ of the elements of a matroid $M$
is said to \emph{induce} an ordered partition $(E_1,\ldots,E_t)$
of $E(M)$ if for each $j \in [t]$, 
$\{e_{n_{j-1}+1},$ $e_{n_{j-1}+2},\ldots,e_{n_j}\} = E_j$,
where $n_j = \left|\bigcup_{i \leq j} E_j\right|$ (and $n_0 = 0$).
\\[-6pt]
\end{Def}

Let $\pi = (e_1,\ldots,e_n)$ be any ordering of $\oE$ that
induces the ordered partition $(E_1,E_2,\ldots,E_t)$, where
for each $j \in [t]$, 
$E_j = \overline{F_j} - \bigcup_{i < j} \overline{F_i}$.
We claim that the width of $\pi$ 
is at most one more than the width of the path-decomposition $\cV$. 
\\[-6pt]

\begin{lemma}
$w_M(\pi) \leq w_{\cG}(\cV) + 1$. \\[-6pt]
\label{width_lemma}
\end{lemma}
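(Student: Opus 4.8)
The plan is to estimate $\l_M(e_1,\ldots,e_m)$ for an arbitrary prefix of $\pi$, and show it never exceeds $w_{\cG}(\cV)+1$. Because $\pi$ induces the ordered partition $(E_1,\ldots,E_t)$, every prefix $\{e_1,\ldots,e_m\}$ lies between two consecutive ``block boundaries'': there is a $j$ with $\bigcup_{i<j} E_i \subseteq \{e_1,\ldots,e_m\} \subseteq \bigcup_{i\le j} E_i$. Writing $S_j = \bigcup_{i \le j} E_i = \bigcup_{i\le j}\overline{F_i}$, I would first reduce to bounding $\l_M(S_j)$ at the boundaries, using submodularity to control the intermediate prefixes: for $S_{j-1} \subseteq A \subseteq S_j$ one has, by submodularity of $\l_M$ applied to $A$ and $S_{j-1}\cup(E(M)-S_j)$ (whose union is $E(M)-(S_j-A)$ and whose intersection is $S_{j-1}$), a bound of the form $\l_M(A) \le \l_M(S_{j-1}) + \l_M(S_j) - \l_M(S_j \cap \ldots)$; more simply, since any prefix is squeezed between $S_{j-1}$ and $S_j$, it suffices to show $\l_M(S_j) \le w_{\cG}(\cV)+1$ for all $j$, together with the analogous bound for prefixes inside block $j$, which will follow because adding the edges of $E_j$ one at a time can only move the connectivity between its values at the two endpoints up to a submodular correction — I will make this precise but it is routine.

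The heart of the matter is therefore to compute (or bound) $\l_M(S_j) = r_M(S_j) + r_M(\oE - S_j) - |V|$, where $S_j$ is the edge set of the subgraph $\ocG[S_j]$ of $\ocG$ consisting of all edges lying in $\overline{F_1}\cup\cdots\cup\overline{F_j}$. By the rank formula for cycle matroids, $r_M(S_j)$ equals $|V(\ocG[S_j])|$ minus the number of connected components of $\ocG[S_j]$, and similarly for the complement. The key structural observations are: (a) every edge-set $\overline{F_i}$ contains the vertex $x$ (via the pendant pairs $l_{xv},r_{xv}$), so $\ocG[S_j]$ is connected for every $j$ (all of $V_1,\ldots,V_j$ are joined to $x$), giving $r_M(S_j) = |V(\ocG[S_j])|-1 = |\bigcup_{i\le j}V_i|$; and (b) the complementary graph $\ocG[\oE-S_j]$ consists of the edges in $\overline{F_{j+1}}\cup\cdots\cup\overline{F_t}$ minus those already counted, and again each such block contains $x$, so this graph is also connected (when nonempty), giving $r_M(\oE-S_j) = |\bigcup_{i>j}V_i|$ — here I use that condition~(ii) guarantees every edge of $E$ lies in some $F_i$, and the pendant edges partition cleanly. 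Substituting, $\l_M(S_j) = |\bigcup_{i\le j}V_i| + |\bigcup_{i>j}V_i| - |V| = |\bigcup_{i\le j}V_i \cap \bigcup_{i>j}V_i|$, which by the path-decomposition property~(\ref{Vj_eq}) (applied with the boundary between blocks $j$ and $j+1$) equals $|V_j \cup V_{j+1}|$ or, more carefully, is contained in $V_j$ enlarged by at most the few vertices shared across the cut — in any case bounded by $|V_j|+1 \le (w_{\cG}(\cV)+1)+1$. I expect a small amount of care is needed to get the bound down to exactly $w_{\cG}(\cV)+1$ rather than $w_{\cG}(\cV)+2$: the extra $+1$ (not $+2$) must come from a sharper analysis of the cut, namely that $\bigcup_{i\le j}V_i \cap \bigcup_{i>j}V_i \subseteq V_{j+1}$ (or $V_j$) by property~(iii), so $\l_M(S_j) \le |V_{j+1}| \le w_{\cG}(\cV)+1$, and the intermediate prefixes inside block $j+1$ contribute at most one more unit of connectivity from the single extra edge being added, still landing at $\le w_{\cG}(\cV)+1$ because the pendant structure keeps both sides connected.

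The main obstacle, then, is the bookkeeping in step two: correctly identifying which vertices the graphs $\ocG[S_j]$ and $\ocG[\oE-S_j]$ span, verifying both are connected via the vertex $x$, and squeezing the final estimate to $w_{\cG}(\cV)+1$ exactly by invoking~(\ref{Vj_eq}) at the right boundary rather than the crude $|V_j \cup V_{j+1}|$ bound. Once $\l_M(S_j) \le w_{\cG}(\cV)+1$ is established at all block boundaries, the reduction from arbitrary prefixes to block boundaries (via submodularity, as sketched above) finishes the proof that $w_M(\pi) \le w_{\cG}(\cV)+1$.
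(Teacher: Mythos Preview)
Your proposal has a genuine gap: the reduction from arbitrary prefixes to block boundaries via submodularity does not work. Submodularity of $\l_M$ gives $\l_M(X\cup Y)+\l_M(X\cap Y)\le \l_M(X)+\l_M(Y)$, which for your choice $X=A$, $Y=S_{j-1}\cup(\oE-S_j)$ yields a \emph{lower} bound on $\l_M(A)$, not an upper bound. More generally, knowing $\l_M(S_{j-1})$ and $\l_M(S_j)$ places no constraint whatsoever on $\l_M(A)$ for $S_{j-1}\subset A\subset S_j$: the connectivity can spike arbitrarily high in the interior of a block. Your remark that ``adding the edges of $E_j$ one at a time can only move the connectivity between its values at the two endpoints up to a submodular correction'' is simply false---each added edge can raise $\l_M$ by $1$, and $|E_j|$ can be large.

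The paper avoids this entirely by bounding $\l_M(X)$ directly for \emph{every} prefix $X=\bigcup_{i<j}E_i\cup E'$ with $E'\subset E_j$, using only the trivial rank inequality $r_M(Y)\le |V(\ocG[Y])|-1$ (no connectivity needed). The point you are missing is that for such $X$, the vertices meeting $X$ lie in $\bigcup_{i\le j}V_i\cup\{x\}$ while those meeting $\oE-X$ lie in $\bigcup_{k\ge j}V_k\cup\{x\}$---both ranges include the index $j$. Hence $\l_M(X)\le |\bigcup_{i\le j}V_i|+|\bigcup_{k\ge j}V_k|-|V| = |\bigcup_{i\le j}V_i\cap\bigcup_{k\ge j}V_k| = |V_j|$ by (\ref{Vj_eq}), and the bound $w_\cG(\cV)+1$ drops out immediately. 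Your detour through exact rank computations (via connectivity of $\ocG[S_j]$) and the attempt to interpolate between boundaries are both unnecessary; the ``small amount of care'' you anticipate is in fact the whole argument, and it is a one-line rank inequality rather than a submodularity trick.
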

\begin{proof} 
Observe first that 
\begin{eqnarray}
w_M(\pi) &=& \max_{j \in [t]} \max_{1 \leq k \leq n_j-n_{j-1}} 
\l_M\left(\bigcup_{i < j} E_i \cup 
\{e_{n_{j-1}+1}, \ldots, e_{n_{j-1}+k}\}\right) \nonumber \\
& \leq & \max_{j \in [t]} \max_{E' \subset E_j} 
\l_M \left(\bigcup_{i < j} E_i \cup E'\right).
\label{wM_eq}
\end{eqnarray}

Let $X = \bigcup_{i < j} E_i \cup E'$ for some $j \in [t]$ and
$E' \subset E_j$, and consider $\l_M(X) = r_M(X) + r_M(\oE - X) - r_M(\oE)$.
Since $\ocG$ is a connected graph, $r_M(\oE) = |\oV|-1 = |V|$.

If $v$ is a vertex of $\ocG$ incident with an edge in $X$,
then $v \in \bigcup_{i \leq j} V_j \disj \{x\}$. So,
the subgraph of $\ocG$ induced by $X$ has its vertices contained
in $\bigcup_{i \leq j} V_j \disj \{x\}$. Therefore,
$r_M(X) \leq \left|\bigcup_{i \leq j} V_j \disj \{x\}\right| - 1
= \left|\bigcup_{i \leq j} V_j\right|$.

Next, consider $\oE - X = (\bigcup_{k > j} E_k) \cup (E_j - E')$.
Reasoning as above, the subgraph of $\cG$ induced by $\oE - X$
has its vertices contained in $\bigcup_{k \geq j} V_k \disj \{x\}$.
Hence, $r_M(\oE-X) \leq \left|\bigcup_{k \geq j} V_k\right|$.

Therefore, we have 
\begin{eqnarray*}
\l_M(X) & \leq & 
\left|\bigcup_{i \leq j} V_j\right| + \left|\bigcup_{k \geq j} V_k\right|
    - |V| \\
&=& \left|\bigcup_{i \leq j} V_j \cap \bigcup_{k \geq j} V_k\right| 
\ \ = \ \  |V_j|,
\end{eqnarray*}
the last equality arising from (\ref{Vj_eq}).
Hence, carrying on from (\ref{wM_eq}),
$$
w_M(\pi) \leq \max_{j \in [t]} |V_j| = w_{\cG}(\cV) + 1,
$$
as desired.
\end{proof} \mbox{} \\[-6pt]

The fact that $\pw(M) \leq \pw(\cG)+1$ easily follows from the above lemma.
Indeed, we may choose $\cV$ to be an optimal path-decomposition of $\cG$.
Then, by Lemma~\ref{width_lemma},
there exists an ordering $(e_1,\ldots,e_n)$ of $E(M)$ such that 
$w_M(e_1,\ldots,e_n) \leq \pw(\cG)+1$.
Hence, $\pw(M) \leq w_M(e_1,\ldots,e_n) \leq \pw(\cG)+1$. \\[-6pt]

We prove the reverse inequality in two steps, first showing that
$\pw(\ocG) = \pw(\cG)+1$, and then showing that $\pw(M) \geq \pw(\ocG)$.
\\[-6pt]

\begin{lemma}
$\pw(\ocG) = \pw(\cG) + 1$. \\[-6pt]
\label{pw_lemma}
\end{lemma}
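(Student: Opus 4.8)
The plan is to prove $\pw(\ocG) = \pw(\cG)+1$ by two inequalities. For the upper bound $\pw(\ocG) \le \pw(\cG)+1$: given an optimal path-decomposition $\cV = (V_1,\ldots,V_t)$ of $\cG$, I would form $\cV' = (V_1 \cup \{x\}, V_2 \cup \{x\}, \ldots, V_t \cup \{x\})$. Since $x$ is adjacent to every vertex of $V$ in $\ocG$, and $V_1,\ldots,V_t$ already cover $V$ and satisfy the interval property, adding $x$ to every bag keeps properties (i)--(iii): the new edges $l_{xv}, r_{xv}$ are covered by any bag containing $v$, which now also contains $x$; and $x$ appears in a contiguous (in fact, all) bags. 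Hence $\cV'$ is a path-decomposition of $\ocG$ of width $\max_j |V_j \cup \{x\}| - 1 = \max_j |V_j| = \pw(\cG)+1$, so $\pw(\ocG) \le \pw(\cG)+1$.

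For the lower bound $\pw(\ocG) \ge \pw(\cG)+1$: let $\cW = (W_1,\ldots,W_s)$ be an optimal path-decomposition of $\ocG$. The vertex $x$ lies in some bag, and by the interval property the set of indices $i$ with $x \in W_i$ is a contiguous block. I claim in fact $x$ must lie in \emph{every} bag: since $x$ is adjacent to every $v \in V$, for each $v$ there is a bag containing both $x$ and $v$; if some bag $W_i$ omitted $x$, then $i$ would lie strictly between (or outside) the block of indices containing $x$, and one checks that some vertex $v$ adjacent to $x$ would be forced to appear both before and after this block while being absent from $W_i$, contradicting property (iii) applied to the edge $xv$ — unless $V$ is covered entirely on one side, but then $x$ could be added to that side's bags anyway. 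I would argue this cleanly by noting that if $x \notin W_i$ for some $i$ in the "interior", then deleting $x$ from all bags and also all bags are forced to already contain every neighbour's witnessing structure; more carefully, I would just observe that restricting $\cW$ to $V$ (i.e., taking $W_i \cap V$) yields a path-decomposition of $\cG$, and separately that $|W_i| \ge |W_i \cap V| + 1$ for at least one bag achieving the maximum, because $x$ must appear in a bag of maximum size. The cleanest route: let $W_i$ be a maximum-size bag; I show $x \in W_i$ (or can be moved into it without increasing width), giving $\pw(\ocG) = \max|W_i| - 1 \ge |W_i \cap V| + 1 - 1 = |W_i \cap V| \ge $ ... — this needs the matching bag of $\cG$ to also be near-maximum. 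To make this airtight, I would instead show directly: from any path-decomposition $\cW$ of $\ocG$ of width $w$, one obtains a path-decomposition of $\cG$ of width $\le w-1$, by first forcing $x$ into every bag (replace $W_i$ by $W_i \cup \{x\}$ for indices between the first and last occurrence of $x$, and argue the bags outside that range can be merged into the range or already don't help), then deleting $x$ from every bag; each bag shrinks by exactly one, and it remains a valid path-decomposition of $\cG$ since the only edges lost are those incident to $x$.

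The main obstacle is the bookkeeping in the lower bound: justifying rigorously that $x$ can be assumed to lie in every bag of a path-decomposition of $\ocG$ without increasing the width. The key structural fact driving this is that $x$ is a dominating vertex (adjacent to all of $V$), so its index-interval in any path-decomposition must "reach" every other vertex's interval; combined with the interval property (iii), this pins $x$ down. Once that is established, the correspondence $\cW \leftrightarrow \cV$ given by adding/deleting $x$ from every bag is a width-shifting bijection between path-decompositions of $\cG$ and those of $\ocG$ in which $x$ is universal, and the equality $\pw(\ocG) = \pw(\cG)+1$ follows immediately. I would present the universality-of-$x$ claim as a short sub-lemma, prove it by the interval argument sketched above, and then the two inequalities become one-line consequences.
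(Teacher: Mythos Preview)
Your overall strategy matches the paper's exactly: prove $\pw(\ocG)\le\pw(\cG)+1$ by adjoining $x$ to every bag of an optimal path-decomposition of $\cG$, and prove $\pw(\ocG)\ge\pw(\cG)+1$ by exhibiting an optimal path-decomposition of $\ocG$ in which $x$ lies in every bag, then deleting $x$. The upper bound is fine as written.

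The lower bound, however, has a gap. Your first claim, that $x$ must lie in \emph{every} bag of an arbitrary path-decomposition of $\ocG$, is simply false: nothing prevents extra bags not containing $x$ from being tacked on at either end. You recognise this and retreat to ``forcing $x$ into every bag,'' but your proposed fix --- adding $x$ to bags between its first and last occurrence (a no-op, since $x$ is already there by property (iii)) and then arguing the outer bags ``can be merged into the range or already don't help'' --- is never actually carried out. The paper's device is cleaner and is what you are groping towards: given an optimal path-decomposition $(\oV_1,\ldots,\oV_t)$ of $\ocG$, let $[i_0,i_1]$ be the interval of indices where $x$ appears, and simply \emph{discard} all bags outside $[i_0,i_1]$. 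One then checks directly that $(\oV_{i_0},\ldots,\oV_{i_1})$ is still a path-decomposition of $\ocG$: since $x$ is adjacent to every $v\in V$, the bag witnessing the edge $xv$ must lie in $[i_0,i_1]$, so every vertex is covered; and for an edge $uv$ witnessed by a bag $\oV_j$ with $j>i_1$, one uses the interval property together with $u,v\in\bigcup_{i_0\le i\le i_1}\oV_i$ to find a bag in $[i_0,i_1]$ containing both. As a subsequence of an optimal decomposition it is itself optimal, and now $x$ is in every bag. Your sketch would become a proof once this argument replaces the hand-waving about merging.
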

\begin{proof}
Clearly, if $\cV = (V_1,\ldots,V_t)$ is a path-decomposition of $\cG$, then 
$\ocV = (V_1 \cup \{x\}, \ldots, V_t \cup \{x\})$ is a path-decomposition
of $\ocG$. Hence, choosing $\cV$ to be an optimal path-decomposition of $\cG$,
we have that $\pw(\ocG) \leq w_{\ocG}(\ocV) = w_{\cG}(\cV)+1 = \pw(\cG)+1$.

For the inequality in the other direction, we will show that 
there exists an optimal path-decomposition, $\tcV = (\tV_1, \ldots, \tV_s)$,
of $\ocG$ such that $x \in \tV_i$ for all $i \in [s]$. We then
have $\cV = (\tV_1 - \{x\}, \ldots, \tV_s - \{x\})$ being a
path-decomposition of $\cG$, and hence, 
$\pw(\cG) \leq w_{\cG}(\cV) = w_{\ocG}(\tcV)-1 = \pw(\ocG)-1$. 

Let $\ocV = (\oV_1,\ldots,\oV_t)$ be an optimal path-decomposition
of $\ocG$, and let $i_0 =\min\{i: x \in \oV_i\}$ and 
$i_1 = \max\{i: x \in \oV_i\}$. Since $\oV_i \cap \oV_k \subset \oV_j$
for $i < j < k$, we must have $x \in \oV_i$ for each $i \in [i_0,i_1]$.

We claim that $(\oV_{i_0}, \oV_{i_0+1}, \ldots, \oV_{i_1})$ is a
path-decomposition of $\ocG$. We only have to show that 
$\bigcup_{i=i_0}^{i_1} \oV_i = \oV$, and that for each pair
of adjacent vertices $u,v \in \oV$, $\{u,v\} \subset \oV_i$
for some $i \in [i_0,i_1]$. To see why the first assertion
is true, consider any $v \in \oV$, $v \neq x$. Since
$x$ is adjacent to $v$, and $\ocV$ is a path-decomposition of $\ocG$,
$\{x,v\} \subset \oV_i$ for some $i \in [t]$. However, $x \in \oV_i$
iff $i \in [i_0,i_1]$, and so, $\{x,v\} \subset \oV_i$ for some
$i \in [i_0,i_1]$. In particular, $v \in \oV_i$ for some $i \in [i_0,i_1]$.

For the second assertion, suppose that $u,v$ is a pair of vertices
adjacent in $\ocG$. Obviously, $\{u,v\} \subset \oV_j$ for some $j \in [t]$.
Suppose that $j \notin [i_0,i_1]$. We consider the case when $j > i_1$;
the case when $j < i_0$ is similar. As $\bigcup_{i=i_0}^{i_1} \oV_i = \oV$,
there exist $i_2, i_3 \in [i_0,i_1]$ such that $u \in \oV_{i_2}$ 
and $v \in \oV_{i_3}$. Without loss of generality (WLOG), $i_2 \leq i_3$. 
If $i_2 = i_3$, then there exists $i \in [i_0,i_1]$ such that 
$\{u,v\} \subset \oV_i$. If $i_2 < i_3$, we have 
$u \in \oV_{i_2} \cap \oV_j$ and $i_2 < i_3 < j$. Hence, $u \in \oV_{i_3}$
as well, and so once again, we have an $i \in [i_0,i_1]$ such that
$\{u,v\} \in \oV_i$.

Thus, $(\oV_{i_0}, \oV_{i_0+1}, \ldots, \oV_{i_1})$ is a path-decomposition
of $\ocG$, with the property that $x \in \oV_i$ for all $i \in [i_0,i_1]$.
It must be an optimal path-decomposition, since it is
a subsequence of the optimal path-decomposition $\ocV$.
\end{proof} \mbox{}\\[-6pt]

To complete the proof of Proposition~\ref{pw_prop}, it 
remains to show that $\pw(M) \geq \pw(\ocG)$. We introduce
some notation at this point. Recall that the two edges between
a pair of adjacent vertices $u$ and $v$ in $\ocG$ (or $\cG$) 
are denoted by $l_{uv}$ and $r_{uv}$. We define 
\begin{eqnarray*}
L_{\cG} &=& \{l_{uv}: u,v \text{ are adjacent vertices in } \cG\}, \\
R_{\cG} &=& \{r_{uv}: u,v \text{ are adjacent vertices in } \cG\},
\end{eqnarray*}
$L_x = \bigcup_{v \in V} \{l_{xv}\}$ and 
$R_x = \bigcup_{v \in V} \{r_{xv}\}$, where $x$ is the distinguished
vertex in $\oV - V$. Thus, $L_\cG \cup R_\cG = E$ and 
$E \cup L_x \cup R_x = \oE$.
Note that, by construction of $\ocG$, $\cl_M(L_x) = \cl_M(R_x) = \oE$,
where $\cl_M$ denotes the closure operator of $M$.

We will need the fact that there exists an optimal ordering 
$(e_1,\ldots,e_n)$ of $\oE$ that induces a certain ordered partition
of $\oE$ of the form 
$$
(L_1,A_1,B_1,R_1,L_2,A_2,B_2,R_2,\ldots,L_t,A_t,B_t,R_t),
$$
where for each $j \in [t]$, $L_j \subset L_x$, $A_j \subset L_\cG$,
$B_j \subset R_\cG$, and $R_j \subset R_x$. This will follow from a 
re-ordering argument given further below.
But first, we make some simple observations about orderings of $\oE$.
Given an ordering of $\oE$, we may assume, WLOG, that for each pair
of adjacent vertices $u,v \in \oV$, $l_{uv}$ appears before $r_{uv}$
in the ordering; we denote this by $l_{uv} < r_{uv}$. We call
such an ordering of $\oE$ a \emph{normal} ordering.
\\[-6pt]

\begin{lemma}
Let $(e_1,\ldots,e_n)$ be a normal ordering of $\oE$. Then, 
for $1 \leq j \leq n-1$, we have
\begin{itemize}
\item[(a)] $\l_M(e_1,\ldots,e_{j+1}) = \l_M(e_1,\ldots,e_j) + 1$
iff $e_{j+1} \notin \cl_M(e_1,\ldots,e_j)$; and
\item[(b)] $\l_M(e_1,\ldots,e_{j+1}) = \l_M(e_1,\ldots,e_j) - 1$
iff $e_{j+1} \notin \cl_M(e_{j+2},\ldots,e_n)$. \\[-6pt]
\end{itemize}
\label{conn_fn_lemma}
\end{lemma}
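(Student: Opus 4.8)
The plan is to express the connectivity function $\l_M(e_1,\ldots,e_j)$ in terms of ranks, and then track how $r_M$ of an initial segment and of the complementary final segment change when one element is appended. Write $X_j = \{e_1,\ldots,e_j\}$, so that $\l_M(X_j) = r_M(X_j) + r_M(\oE - X_j) - r_M(\oE)$ and $\oE - X_j = \{e_{j+1},\ldots,e_n\}$. Passing from $X_j$ to $X_{j+1}$ adds $e_{j+1}$ to the first set and removes it from the second. Thus $r_M(X_{j+1}) - r_M(X_j) \in \{0,1\}$, being $1$ exactly when $e_{j+1} \notin \cl_M(X_j)$; and $r_M(\oE - X_j) - r_M(\oE - X_{j+1}) \in \{0,1\}$, being $1$ exactly when $e_{j+1} \notin \cl_M(\{e_{j+2},\ldots,e_n\})$. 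Hence $\l_M(X_{j+1}) - \l_M(X_j)$ is the difference of these two indicator quantities, so it lies in $\{-1,0,1\}$, and I have to rule out the ``crossed'' cases to get the stated iff-characterizations.

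For part (a): the claim is that $\l_M(X_{j+1}) = \l_M(X_j) + 1$ iff $e_{j+1} \notin \cl_M(X_j)$. The ``only if'' direction is immediate from the two bullet observations above, since the increment can be $+1$ only if the first rank strictly increases. For ``if'', suppose $e_{j+1} \notin \cl_M(X_j)$, so the first rank goes up by $1$; I need the second rank to stay the same, i.e.\ $e_{j+1} \in \cl_M(\{e_{j+2},\ldots,e_n\})$. Here is where the structure of $M = M(\ocG)$ and the \emph{normality} of the ordering must be used: because $\ocG$ satisfies (P3), every edge $e_{j+1}$ of $\ocG$ has a parallel partner, and by normality, $e_{j+1}$ being the element appended at step $j+1$ (as opposed to its partner) forces the partner to lie among $\{e_{j+2},\ldots,e_n\}$ — parallel edges are in each other's closure, so $e_{j+1} \in \cl_M(\{e_{j+2},\ldots,e_n\})$. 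Wait — I must be careful: normality says $l_{uv} < r_{uv}$, so if $e_{j+1} = r_{uv}$ its partner $l_{uv}$ is already in $X_j$, not after it. So the argument splits: if $e_{j+1}$ is a ``right'' edge $r_{uv}$, then $l_{uv} \in \cl_M(X_j)$ would give $e_{j+1} \in \cl_M(X_j)$, contradicting the hypothesis; so this case cannot occur under the hypothesis $e_{j+1}\notin\cl_M(X_j)$, meaning $e_{j+1}$ must be a ``left'' edge $l_{uv}$, and then $r_{uv} \in \{e_{j+2},\ldots,e_n\}$ gives $e_{j+1} \in \cl_M(\{e_{j+2},\ldots,e_n\})$, so the second rank does not drop. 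Part (b) follows by the dual/symmetric argument: reversing the ordering $(e_n,\ldots,e_1)$ is again normal after swapping the roles of $l$ and $r$, and $\l_M$ is invariant under set complementation, so (b) is just (a) read from the other end. Alternatively one argues directly: $\l_M(X_{j+1}) = \l_M(X_j) - 1$ requires the second rank to strictly drop, which is by definition $e_{j+1}\notin\cl_M(\{e_{j+2},\ldots,e_n\})$ for the ``only if''; and for ``if'', $e_{j+1}\notin\cl_M(\{e_{j+2},\ldots,e_n\})$ forces $e_{j+1}$ to be a ``right'' edge $r_{uv}$ (else its partner sits after it and kills the hypothesis), whence $l_{uv}\in X_j$ and the first rank does not increase.

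The step I expect to be the main obstacle is pinning down exactly how normality together with property (P3) is invoked to exclude the two mixed cases ($\l_M$ changing by $0$ while one rank moves up and the other down, in a way inconsistent with the hypothesis). The clean way to phrase it: for any edge $e$ of $\ocG$ with parallel partner $e'$, we have $e \in \cl_M(\{e'\})$; hence for a normal ordering, if $e_{j+1}$ is a left edge then its partner is to its right, so $e_{j+1}\in\cl_M(\{e_{j+2},\ldots,e_n\})$, while if $e_{j+1}$ is a right edge then its partner is to its left, so $e_{j+1}\in\cl_M(X_j)$. Thus \emph{for every} $j$, $e_{j+1}$ lies in the closure of $X_j$ or in the closure of $\{e_{j+2},\ldots,e_n\}$ (indeed exactly one of these, generically), which immediately forbids the simultaneous ``first rank up, second rank down'' configuration and makes (a) and (b) into genuine iff statements rather than mere one-directional implications.
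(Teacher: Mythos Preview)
Your proof is correct and follows essentially the same approach as the paper: both arguments hinge on the observation that, by normality and property (P3), if $e_{j+1}\notin\cl_M(e_1,\ldots,e_j)$ then $e_{j+1}$ must be a left edge $l_{uv}$ (else its partner $l_{uv}$ precedes it and forces it into the closure), whence $r_{uv}$ lies in the tail and $e_{j+1}\in\cl_M(e_{j+2},\ldots,e_n)$. Your final-paragraph formulation---that every $e_{j+1}$ lies in the closure of either its predecessors or its successors---is a slightly cleaner packaging of the same idea, and your explicit treatment of (b) fills in what the paper leaves as ``similar.''
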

\begin{proof}
We only prove (a), as the proof of (b) is similar. It is easy to deduce 
from the definition of the connectivity function that
$\l_M(e_1,\ldots,e_{j+1}) = \l_M(e_1,\ldots,e_j) + 1$ iff 
$e_{j+1} \notin \cl_M(e_1,\ldots,e_j)$ and
$e_{j+1} \in \cl_M(e_{j+2},\ldots,e_n)$. 

Now, if $e_{j+1} \notin \cl_M(e_1,\ldots,e_j)$,
then $e_{j+1} = l_{uv}$ for some $u,v$. (If not, \emph{i.e.}, 
if $e_{j+1} = r_{uv}$, then since $l_{uv} < r_{uv}$, 
we must have $l_{uv} \in \{e_1,\ldots,e_j\}$, 
and so, $e_{j+1} = r_{uv} \in \cl_M(l_{uv}) \subset \cl_M(e_1,\ldots,e_j)$, 
a contradiction.) Therefore, $\{e_{j+2},\ldots,e_n\}$ contains
$r_{uv}$, and hence, $e_{j+1} = l_{uv} \in \cl_M(e_{j+2},\ldots,e_n)$.
We have thus shown that if $e_{j+1} \notin \cl_M(e_1,\ldots,e_j)$,
then $e_{j+1} \in \cl_M(e_{j+2},\ldots,e_n)$. Part (a) of
the lemma now follows. \end{proof} \mbox{} \\[-6pt]

We now describe a procedure that takes as input a normal ordering
of $\oE$, and produces as output 
a re-ordering of $\oE$ with certain desirable properties. \\[-6pt]

\pagebreak
\noindent\underline{\textsc{Re-ordering Algorithm}} \\[-6pt]

\emph{Input}: a normal ordering $(e_1,\ldots,e_n)$ of $\oE$. \\[-6pt]

\emph{Initialization}: $j = 0$. \\[-4pt]

\begin{tabular}{ll}
\underline{Step 0}: & If $j=0$, set $X_j = \emptyset$; \\
& else, set 
$X_j = \cl_M(e_1,\ldots,e_j) - \{e_1,\ldots,e_j\}$. \\[6pt]
\underline{Step 1}: & If $X_j = \emptyset$, \\
& \ \ \ find the least $k > j$ such that \\ 
& \ \ \ \ \ \ for some $m > j$, $e_m \in L_x \cap \cl_M(e_1,\ldots,e_k)$; \\
& \ \ \ set $(e_1',\ldots,e_n') = 
(e_1,\ldots,e_j,e_m,e_{j+1},\ldots,e_{m-1},e_{m+1},\ldots,e_n)$.\\[6pt]
& If $X_j \neq \emptyset$, \\
& \ \ \ if $L_x \cap X_j \neq \emptyset$, find an $m > j$
such that $e_m \in L_x \cap X_j$; \\
& \ \ \ else, if $L_\cG \cap X_j \neq \emptyset$, find an $m > j$
such that $e_m \in L_\cG \cap X_j$; \\
& \ \ \ else, if $R_\cG \cap X_j \neq \emptyset$, find an $m > j$
such that $e_m \in R_\cG \cap X_j$; \\
& \ \ \ else, if $R_x \cap X_j \neq \emptyset$, find an $m > j$
such that $e_m \in R_x \cap X_j$; \\
& \ \ \ set $(e_1',\ldots,e_n') = 
(e_1,\ldots,e_j,e_m,e_{j+1},\ldots,e_{m-1},e_{m+1},\ldots,e_n)$.\\[6pt]
\underline{Step 2}: & Replace $j$ by $j+1$. \\
& If $j < n$, replace $(e_1,\ldots,e_n)$ by $(e_1',\ldots,e_n')$, \\
& \ \ \ and return to Step 0; \\
& else, output $(e_1',\ldots,e_n')$.
\end{tabular}
\mbox{}\\[10pt]

Denote by $(e_1^*,\ldots,e_n^*)$ the final output generated by the 
above algorithm. Set $X_0^* = \emptyset$, and for 
$j \in [n]$, $X_j^* = \cl_M(e_1^*,\ldots,e_j^*) - \{e_1^*,\ldots,e_j^*\}$. 
Stepping through the algorithm, one may easily check that 
$(e_1^*,\ldots,e_n^*)$ has the following property:
for $0 \leq j \leq n-1$, if $X_j^* = \emptyset$, 
then $e_{j+1}^* \in L_x$, and if $X_j^* \neq \emptyset$, then 
$$
e_{j+1}^* \in 
\begin{cases}
L_x \cap X_j^*,  & \text{ if } L_x \cap X_j^* \neq \emptyset \\
L_\cG \cap X_j^*, & \text{ if } L_x \cap X_j^* = \emptyset, 
\text{ but } L_{\cG} \cap X_j^* \neq \emptyset \\
R_{\cG} \cap X_j^*, & \text{ if } L_x \cap X_j^* = L_{\cG} \cap X_j^* 
= \emptyset, \text{ but } R_{\cG} \cap X_j^* \neq \emptyset \\
R_x \cap X_j^*, & 
\text{ if } L_x \cap X_j^* = L_{\cG} \cap X_j^* = R_{\cG} \cap X_j^* 
= \emptyset, \text{ but } R_x \cap X_j^* \neq \emptyset. \\
\end{cases}
$$
The following claim can be readily deduced from this property,
and we leave the details to the reader. \\[-6pt]

\begin{claim} (a)\ The ordering $(e_1^*,\ldots,e_n^*)$ induces an 
ordered partition of $\oE$ of the form 
$$
(L_1,A_1,B_1,R_1,L_2,A_2,B_2,R_2,\ldots,L_t,A_t,B_t,R_t),
$$
where for each $j \in [t]$, $L_j \subset L_x$, $A_j \subset L_\cG$,
$B_j \subset R_\cG$ and $R_j \subset R_x$. Moreover,
for each $u,v \in \oV$, $l_{uv} \in L_j \cup A_j$ iff 
$r_{uv} \in B_j \cup R_j$. \\[6pt]
(b)\ For the ordered partition in (a), we have for each $j \in [t]$, 
$$
A_j \cup B_j \subset \cl_M(\bigcup_{i \leq j} L_i) - 
\cl_M(\bigcup_{i < j} L_i).
$$
\\[-6pt]
\label{re-ordering_claim}
\end{claim}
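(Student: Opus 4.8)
The plan is to verify both parts by ``stepping through'' the Re-ordering Algorithm and using the displayed membership property of $(e_1^*,\ldots,e_n^*)$ together with Lemma~\ref{conn_fn_lemma}. For part~(a), first I would observe that the displayed property says exactly that at each step the next element is drawn from $X_j^*$ (when $X_j^*\neq\emptyset$), with the priority order $L_x$, then $L_\cG$, then $R_\cG$, then $R_x$; and when $X_j^*=\emptyset$ the next element lies in $L_x$. I would then argue by induction on the blocks that this priority, combined with the fact that $X_j^*$ can only \emph{lose} elements when one of its members is moved into the prefix and can only \emph{gain} elements of a ``later'' priority class, forces the ordering to be partitioned into consecutive runs of the form $L_j$ (elements of $L_x$), then $A_j$ (elements of $L_\cG$), then $B_j$ (elements of $R_\cG$), then $R_j$ (elements of $R_x$), cyclically. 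The key point is that once the algorithm starts pulling elements of $R_x\cap X_j^*$, no element of $L_x$, $L_\cG$, or $R_\cG$ can be in $X_j^*$ (by the case split), and a new $L_x$-element can enter the closure span only via the ``$X_j=\emptyset$'' branch, which begins a fresh block $L_{j+1}$. For the ``moreover'' part, I would use the parallel structure of $\ocG$: since $l_{uv}$ and $r_{uv}$ are parallel edges, $r_{uv}\in\cl_M(\{l_{uv}\})$ and vice versa, so $r_{uv}$ enters $X_j^*$ as soon as $l_{uv}$ is placed; tracing which block each lands in against the $L_x/L_\cG/R_\cG/R_x$ classification gives the stated equivalence $l_{uv}\in L_j\cup A_j \iff r_{uv}\in B_j\cup R_j$.

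For part~(b), the containment $A_j\cup B_j\subset\cl_M\!\big(\bigcup_{i\le j}L_i\big)-\cl_M\!\big(\bigcup_{i<j}L_i\big)$ should fall out of the same invariant. An element $e$ placed into $A_j$ or $B_j$ satisfies $e\in X^*$ at the moment it is chosen, i.e.\ $e\in\cl_M(\text{prefix})$; and at that moment the prefix, by the block structure from (a), is contained in $\bigcup_{i\le j}L_i\cup A_1\cup B_1\cup\cdots$, all of whose non-$L_x$ members were themselves already shown (inductively) to lie in $\cl_M(\bigcup_{i\le j}L_i)$. Hence $e\in\cl_M(\bigcup_{i\le j}L_i)$. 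For the other half, $e\notin\cl_M(\bigcup_{i<j}L_i)$: if it were in that closure, then because the closure is monotone and $\bigcup_{i<j}L_i$ is a prefix of the ordering completed before block $L_j$ begins, $e$ would have been available in $X^*$ at an earlier step and — being an $L_\cG$ or $R_\cG$ element with higher priority than anything in $R_x$, and the algorithm never skipping a nonempty higher-priority class — would have been pulled before block $L_j$ started, contradicting $e\in A_j\cup B_j$.

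I expect the main obstacle to be part~(a)'s rigorous bookkeeping: one must argue that the blocks are genuinely \emph{consecutive} and that the cyclic pattern $L_j,A_j,B_j,R_j$ cannot be interrupted — in particular, that after $R_j$ is exhausted the algorithm cannot re-enter $L_\cG$ or $R_\cG$ without first going through the ``$X_j=\emptyset$'' branch (which opens $L_{j+1}$). This requires carefully showing that $X_j^*$ being exhausted of $L_x$-, $L_\cG$-, and $R_\cG$-elements is a ``stable'' condition until $X_j^*$ empties entirely, which in turn leans on the fact that adding an $R_x$-element to the prefix cannot bring a new $L_x$-, $L_\cG$-, or $R_\cG$-element into the closure span (here one uses $\cl_M(L_x)=\cl_M(R_x)=\oE$ and the parallel-edge structure, so the only new closure-members created are the partners $r_{uv}$ of already-placed $l_{uv}$, which are $R_\cG$ or $R_x$ elements — never fresh $L_x$ or $L_\cG$ elements). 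Everything else is routine induction on $j$, which is presumably why the authors leave the details to the reader.
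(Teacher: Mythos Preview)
Your proposal is correct and is exactly the approach the paper has in mind (the paper supplies no proof, only the displayed membership property together with the remark that the claim ``can be readily deduced'' from it). One streamlining worth noting: whenever $X_j^*\neq\emptyset$ the algorithm selects $e_{j+1}^*\in X_j^*\subset\cl_M(e_1^*,\ldots,e_j^*)$, so the closure does not change and $X_{j+1}^*=X_j^*\setminus\{e_{j+1}^*\}$ exactly; thus within each block $X^*$ merely shrinks one element at a time in the stated priority order, which makes the separate ``stability'' check you anticipate for the $R_j$-phase (and the appeal to $\cl_M(L_x)=\oE$) unnecessary.
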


The crucial property of $(e_1^*,\ldots,e_n^*)$ is the following.
\\[-6pt]

\begin{lemma}
If $(e_1^*,\ldots,e_n^*)$ is the output of the Re-ordering Algorithm
in response to the input $(e_1,\ldots,e_n)$, then
$w_M(e_1^*,\ldots,e_n^*) \leq w_M(e_1,\ldots,e_n)$. \\[-6pt]
\label{re-ordering_lemma}
\end{lemma}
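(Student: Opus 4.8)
\textbf{Proof plan for Lemma~\ref{re-ordering_lemma}.}
The plan is to argue that a single pass of the Re-ordering Algorithm's inner loop — one application of Step~1, which extracts an element $e_m$ and slides it left to position $j+1$ — cannot increase the width, and then iterate. Since the algorithm consists of $n$ such moves (one per value of $j$), and since each intermediate ordering is again a normal ordering (moving an element $l_{uv}$ forward never causes $r_{uv}$ to precede it, and when we move $r_{uv}$ forward in the $R_x$/$R_\cG$ cases we will have already placed $l_{uv}$), it suffices to bound the width after one move by the width before it.

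So fix a normal ordering $(e_1,\dots,e_n)$ and suppose Step~1 selects index $m>j$ and forms $(e_1',\dots,e_n') = (e_1,\dots,e_j,e_m,e_{j+1},\dots,e_{m-1},e_{m+1},\dots,e_n)$. The prefixes of length $\leq j$ and $\geq m$ are unchanged, so only the prefixes $\{e_1,\dots,e_j,e_m\}$ and $\{e_1,\dots,e_j,e_m,e_{j+1},\dots,e_{j+\ell}\}$ for $1\le \ell\le m-j-1$ need to be examined. For the latter, note that $\{e_1,\dots,e_j,e_m,e_{j+1},\dots,e_{j+\ell}\} = \{e_1,\dots,e_{j+\ell}\}\cup\{e_m\}$, and $\l_M(S\cup\{e_m\})\le \l_M(S)+1$ always, so these prefixes have width at most $w_M(e_1,\dots,e_n)+1$ — which is not quite good enough by itself. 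The real work is to show the selection rule in Step~1 guarantees $e_m\in\cl_M(e_1,\dots,e_{j+\ell})$ for every such $\ell$, so that in fact $\l_M(S\cup\{e_m\}) = \l_M(S)$ and no prefix width goes up. In the case $X_j\neq\emptyset$, we have by definition $e_m\in X_j = \cl_M(e_1,\dots,e_j)-\{e_1,\dots,e_j\}\subseteq\cl_M(e_1,\dots,e_{j+\ell})$, so this is immediate. In the case $X_j=\emptyset$, the element $e_m$ was chosen (together with the least such $k$) so that $e_m\in\cl_M(e_1,\dots,e_k)$ for some $k$; I would show that, because $k$ is chosen least and $X_j=\emptyset$, in fact $m\le k$ forces $e_m\in\cl_M(e_1,\dots,e_{j+\ell})$ for all $\ell\ge k-j$, and for the shorter prefixes $\ell<k-j$ one invokes the minimality of $k$ together with Lemma~\ref{conn_fn_lemma}(a): if adding $e_{j+1},\dots,e_{j+\ell}$ never creates a new closure element then those prefixes are "growing" and their connectivity is monotone, keeping everything at or below the old width.

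The one remaining prefix is $\{e_1,\dots,e_j,e_m\}$ itself. Here I would use Lemma~\ref{conn_fn_lemma}: since the ordering is normal, $\l_M(e_1,\dots,e_j,e_m) \le \l_M(e_1,\dots,e_j)+1$, with equality only if $e_m\notin\cl_M(e_1,\dots,e_j)$. But in the $X_j\neq\emptyset$ branch $e_m\in X_j\subseteq\cl_M(e_1,\dots,e_j)$, so $\l_M(e_1,\dots,e_j,e_m)\le\l_M(e_1,\dots,e_j)\le w_M(e_1,\dots,e_n)$; and in the $X_j=\emptyset$ branch, $\cl_M(e_1,\dots,e_j)=\{e_1,\dots,e_j\}$ so $e_m$ is a new element and $\l_M$ jumps by one, but then I must check this value is still at most $w_M(e_1,\dots,e_n)$ — which should follow because when $X_j=\emptyset$ the "old" ordering also had to increase its connectivity at the next step (the rank has not yet saturated), so $w_M(e_1,\dots,e_n)\ge \l_M(e_1,\dots,e_j)+1$ anyway by Lemma~\ref{conn_fn_lemma}(a) applied to some later prefix. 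I expect the main obstacle to be precisely this bookkeeping in the $X_j=\emptyset$ case: verifying that pulling a yet-unspanned $L_x$-element to the front does not create a prefix whose connectivity exceeds the maximum already attained by the input ordering — the cleanest route is probably to track the quantity $\l_M(e_1,\dots,e_i)$ as a lattice path that goes up by $1$, down by $1$, or stays, using Lemma~\ref{conn_fn_lemma}, and observe that the re-ordering move only relocates an "up-step" earlier without changing the multiset of step types, hence without raising the running maximum.
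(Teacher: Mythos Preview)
Your overall architecture matches the paper's: reduce to a single iteration of Step~1 and bound each new prefix by some old prefix. The $X_j\neq\emptyset$ case is essentially right (note that $\l_M(S\cup\{e_m\})\le\l_M(S)$, not necessarily equality, but the inequality is what you need). The $X_j=\emptyset$ case, however, has a genuine gap.

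For prefixes of length $s$ with $j+1<s<k^*$ you have $\l_M(e_1',\ldots,e_s')\le 1+\l_M(e_1,\ldots,e_{s-1})$, and to absorb the ``$+1$'' you need to know that $\l_M(e_1,\ldots,e_{s-1})\le\l_M(e_1,\ldots,e_{k^*-1})$, i.e.\ that $\l_M$ is non-decreasing along the old ordering between positions $j+1$ and $k^*-1$. By Lemma~\ref{conn_fn_lemma}(b) this amounts to showing $e_s\in\cl_M(e_{s+1},\ldots,e_n)$ for every such $s$. Your justification (``minimality of $k$'' and ``if adding $e_{j+1},\ldots,e_{j+\ell}$ never creates a new closure element'') does not establish this: minimality of $k$ only controls when \emph{$L_x$-elements} enter the closure of a prefix, whereas what you need concerns the closure of the \emph{suffix}, and there is no general reason $e_s$ cannot be a coloop of $\{e_s,\ldots,e_n\}$. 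The paper's proof here is genuinely graph-specific: it argues by cases that any putative $e_s\notin\cl_M(e_{s+1},\ldots,e_n)$ must be some $r_{uv}$ with $u,v\in V$, and then uses the triangles $(l_{xu},l_{uv},l_{xv})$ and $(l_{xu},l_{xv},r_{uv})$ in $\ocG$ together with the flat property of $\{e_1,\ldots,e_j\}$ and the minimality of $k^*$ to force some $l_{xw}$ into $\cl_M(e_1,\ldots,e_{s-1})$, contradicting the choice of $k^*$. Your lattice-path heuristic (``relocate an up-step earlier without changing the multiset of step types'') does not rescue this: step types depend on both the prefix and the suffix, and moving $e_m$ changes the suffix seen by $e_{j+1},\ldots,e_{m-1}$, so the multiset of step types can change. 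You really do need the structural argument about triangles through $x$.
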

\begin{proof}
Steps~0--1 of the algorithm go through $n$ iterations,
indexed by $j \in \{0,1,\ldots,n-1\}$. In the $j$th iteration,
Step 1 is given a normal ordering $(e_1,\ldots,e_n)$, in 
response to which it produces an ordering $(e_1',\ldots,e_n')$,
which is also normal. To prove the lemma, it is enough
to show that $w_M(e_1',\ldots,e_n') \leq w_M(e_1,\ldots,e_n)$.

So, suppose that the algorithm is in its $j$th iteration 
($0 \leq j \leq n-1$). We first dispose of the case when 
$X_j \neq \emptyset$. Then, 
$$
(e_1',\ldots,e_n') = 
(e_1,\ldots,e_j,e_m,e_{j+1},\ldots,e_{m-1},e_{m+1},\ldots,e_n)
$$
for some $m > j$ such that $e_m \in X_j$. Observe that if
$1 \leq s \leq j$ or if $m \leq s \leq n$, then 
$(e_1',\ldots,e_s')$ is just a re-ordering of $(e_1,\ldots,e_s)$,
and hence $\l_M(e_1',\ldots,e_s') = \l_M(e_1,\ldots,e_s)$.

So, consider $j < s < m$. In this case, $(e_1',\ldots,e_s') = 
(e_1,\ldots,e_j,e_m,e_{j+1},\ldots,e_{s-1})$. Since 
$e_m \in \cl_M(e_1,\ldots,e_j)$, we have $r_M(e_1',\ldots,e_s')
= r_M(e_1,\ldots,e_j,e_{j+1},\ldots,e_{s-1})$. On the other hand,
\begin{eqnarray}
r_M(e_{s+1}',\ldots,e_n') &=& r_M(e_s,\ldots,e_{m-1},e_{m+1},\ldots,e_n) 
\nonumber \\
&\leq& r_M(e_s,\ldots,e_{m-1},e_m,e_{m+1},\ldots,e_n).
\label{rM_eq1}
\end{eqnarray}
Hence, $\l(e_1',\ldots,e_s') \leq \l(e_1,\ldots,e_{s-1})$.
Therefore, for any $s \in [n]$, we have shown that there
exists a $t \in [n]$ such that $\l(e_1',\ldots,e_s') 
\leq \l(e_1,\ldots,e_t)$. It follows that 
$w_M(e_1',\ldots,e_n') \leq w_M(e_1,\ldots,e_n)$. \\[-6pt]

We must now deal with the case when $X_j = \emptyset$, 
\emph{i.e.}, $\cl_M(e_1,\ldots,e_j) = \{e_1,\ldots,e_j\}$. 
Note that if $L_x \subset \{e_1,\ldots,e_j\}$, then since 
$\cl_M(L_x) = \oE$, we have $\cl_M(e_1,\ldots,e_j) = \oE$. 
Therefore, $\{e_1,\ldots,e_j\} = \oE$, 
which means that $j = n$, a contradiction. Therefore, 
there must exist some $m > j$ such that $e_m \in L_x$. 

Let $k^*$ be the least integer $k > j$ such that there exists 
$e_m \in L_x \cap \cl_M(e_1,\ldots,e_k)$ for some $m > j$. 
By choice of $k^*$, we have $m \geq k^*$. For this $m$, we again have
$$
(e_1',\ldots,e_n') = 
(e_1,\ldots,e_j,e_m,e_{j+1},\ldots,e_{m-1},e_{m+1},\ldots,e_n).
$$
As before, if $1 \leq s \leq j$ or if $m \leq s \leq n$, then
$\l_M(e_1',\ldots,e_s') = \l_M(e_1,\ldots,e_s)$. For $k^* < s < m$,
we have
$$
r_M(e_1',\ldots,e_s') = r_M(e_1,\ldots,e_j,e_m,e_{j+1},\ldots,e_{s-1})
= r_M(e_1,\ldots,e_j,e_{j+1},\ldots,e_{s-1}),
$$
as $e_m \in \cl_M(e_1,\ldots,e_{k^*}) \subset \cl_M(e_1,\ldots,e_{s-1})$.
And as in (\ref{rM_eq1}), $r_M(e_{s+1}',\ldots,e_n') \leq 
r_M(e_s,\ldots,e_n)$. Hence, 
$\l(e_1',\ldots,e_s') \leq \l(e_1,\ldots,e_{s-1})$.

We are left with $j+1 \leq s \leq k^*$. Note that by choice of $k^*$,
$e_m \notin \cl_M(e_1,\ldots,e_{s-1})$. Therefore,
\begin{eqnarray*}
r_M(e_1',\ldots,e_s') &=& r_M(e_1,\ldots,e_j,e_m,e_{j+1},\ldots,e_{s-1}) \\
&=& 1 + r_M(e_1,\ldots,e_j,e_{j+1},\ldots,e_{s-1}),
\end{eqnarray*}
Since (\ref{rM_eq1}) again applies, we have that 
\begin{equation}
\l_M(e_1',\ldots,e_s') \leq 1 + \l_M(e_1,\ldots,e_{s-1}).
\label{l_eq1}
\end{equation}
Observe that, since $e_{j+1} \notin \cl_M(e_1,\ldots,e_j)$, 
by Lemma~\ref{conn_fn_lemma}(a), 
\begin{equation}
\l_M(e_1,\ldots,e_{j+1}) = \l_M(e_1,\ldots,e_j) + 1.
\label{l_eq2}
\end{equation}
Furthermore, by choice of $k^*$, 
$e_m \notin \cl_M(e_1,\ldots,e_{k^*-1})$, but 
$e_m \in \cl_M(e_1,\ldots,e_{k^*})$, which together imply that 
$e_{k^*} \notin \cl_M(e_1,\ldots,e_{k^*-1})$. Hence, again
by Lemma~\ref{conn_fn_lemma}(a),
\begin{equation}
\l_M(e_1,\ldots,e_{k^*}) = \l_M(e_1,\ldots,e_{k^*-1}) + 1.
\label{l_eq3}
\end{equation}
Therefore, from (\ref{l_eq1})--(\ref{l_eq3}), we find that
for $s = j+1$ or $s=k^*$, we have $\l_M(e_1',\ldots,e_s')
\leq \l_M(e_1,\ldots,e_s)$.

We claim that for $j+1 < s < k^*$, we have 
$\l_M(e_1,\ldots,e_{s-1}) \leq \l_M(e_1,\ldots,e_s)$, so that
by induction, $\l_M(e_1,\ldots,e_{s-1}) \leq \l_M(e_1,\ldots,e_{k^*-1})$.
This would then imply, via (\ref{l_eq1}) and (\ref{l_eq3}), that
$\l_M(e_1',\ldots,e_s') \leq \l_M(e_1,\ldots,e_{k^*})$.
Thus, for any $s \in [n]$, we have a $t \in [n]$ such that 
$\l(e_1',\ldots,e_s') \leq \l(e_1,\ldots,e_t)$. Therefore,
$w_M(e_1',\ldots,e_n') \leq w_M(e_1,\ldots,e_n)$, which would
complete the proof of the lemma. 

To prove our claim, it is enough to show that when $j+1 < s < k^*$,
we have $e_s \in \cl_M(e_{s+1},\ldots,e_n)$.
Indeed, it then follows from Lemma~\ref{conn_fn_lemma}(b)
that $\l_M(e_1,\ldots,e_s) \geq \l_M(e_1,\ldots,e_{s-1})$.
So, suppose that $e_s \notin \cl_M(e_{s+1},\ldots,e_n)$
for some $j+1 < s < k^*$. Then, $e_s = r_{uv}$ for some $u,v \in \oV$.
(Otherwise, if $e_s = l_{uv}$, then since $r_{uv} > l_{uv}$,
we would have $e_s \in \cl_M(e_{s+1},\ldots,e_n)$.) Note that
$l_{uv} \notin \{e_1,\ldots,e_j\}$; otherwise, the fact that
$\{e_1,\ldots,e_j\}$ is a flat of $M$ would imply that 
$r_{uv} \in \{e_1,\ldots,e_j\}$. 
So, $l_{uv} \in \{e_{j+1},\ldots,e_{s-1}\}$.

Suppose that $e_s = r_{xv}$ for some $v \in \oV$. Then, 
$l_{xv} \in \{e_{j+1},\ldots,e_{s-1}\}$, which contradicts the 
choice of $k^*$. Therefore, $e_s \notin R_x$, meaning that
$e_s = r_{uv}$ for some $u,v \in V$.

Now, if $l_{xu},l_{xv} \in \{e_{s+1},\ldots,e_n\}$, then
$e_s \in \cl_M(e_{s+1},\ldots,e_n)$, as $(l_{xu},l_{xv},r_{uv})$ is
a triangle in $\ocG$. So, WLOG, $l_{xu} \notin \{e_{s+1},\ldots,e_n\}$.
By choice of $k^*$, $l_{xu} \notin \{e_{j+1},\ldots,e_s\}$. Therefore,
$l_{xu} \in \{e_1,\ldots,e_j\}$. But now, 
$l_{xv} \in \cl_M(e_1,\ldots,e_{s-1})$, as $(l_{xu},l_{uv},l_{xv})$
is a triangle in $\ocG$. However, $l_{xv} \notin \{e_1,\ldots,e_j\}$;
otherwise, we would have $l_{xu}, l_{xv} \in \{e_1,\ldots,e_j\}$,
which, since $\{e_1,\ldots,e_j\}$ is a flat and $(l_{xu},l_{uv},l_{xv})$
is a triangle, would imply that $l_{uv} \in \{e_1,\ldots,e_j\}$.
Thus, $l_{xv} = e_{m^*}$ for some $m^* > j$. As already noted,
$l_{xv} \in \cl_M(e_1,\ldots,e_{s-1})$, and so once again,
our choice of $k^*$ is contradicted. 

Therefore, our assumption that $e_s \notin \cl_M(e_{s+1},\ldots,e_n)$
always leads to a contradiction, from which we conclude that 
the assumption is false. This completes the proof of the lemma.
\end{proof} \mbox{} \\[-6pt]

We can now furnish the last remaining piece of the proof of 
Proposition~\ref{pw_prop}. \\[-6pt]

\begin{lemma}
$\pw(M) \geq \pw(\ocG)$. \\[-6pt]
\label{pw_lemma2}
\end{lemma}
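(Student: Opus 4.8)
The plan is to show that from an optimal ordering of $E(M) = \oE$ one can extract a path-decomposition of $\ocG$ whose width is at most $\pw(M)$, hence $\pw(\ocG) \le \pw(M)$. The natural candidate for the path-decomposition coming from an ordering $(e_1,\dots,e_n)$ is to define, for each prefix, the set of vertices of $\ocG$ that are incident with both an edge in the prefix and an edge in the suffix; these are the ``boundary vertices'', and for a graphic matroid they account for the connectivity value $\l_M$ of the prefix. The subtlety is that $\l_M(X)$ for a graphic matroid measures the number of shared vertices only up to a correction coming from connected components, so the crude correspondence ``$\l_M(\text{prefix}) = |\text{boundary}|$'' need not hold for an arbitrary ordering.

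This is exactly why the Re-ordering Algorithm and Claim~\ref{re-ordering_claim} were set up. First I would invoke Lemma~\ref{re-ordering_lemma} to replace a given optimal ordering by one of the special form guaranteed by Claim~\ref{re-ordering_claim}: an ordering $(e_1^*,\dots,e_n^*)$ inducing an ordered partition $(L_1,A_1,B_1,R_1,\dots,L_t,A_t,B_t,R_t)$ with $L_j \subset L_x$, $A_j \subset L_\cG$, $B_j \subset R_\cG$, $R_j \subset R_x$, the pairing property $l_{uv} \in L_j \cup A_j \iff r_{uv} \in B_j \cup R_j$, and the closure property $A_j \cup B_j \subset \cl_M(\bigcup_{i\le j}L_i) - \cl_M(\bigcup_{i<j}L_i)$. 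The width of this ordering is still at most $\pw(M)$. I would then consider only the ``block boundaries'' — the prefixes $\bigcup_{i\le j}(L_i\cup A_i\cup B_i\cup R_i)$ for $j \in [t]$ — since the maximum of $\l_M$ over all prefixes is at least its maximum over this subcollection; call these values $\l_j$.

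The heart of the argument is to turn these $\l_j$ into the sizes of a path-decomposition of $\ocG$. For each $j$, let $W_j$ be the set of vertices $v \in \oV$ that are incident with an edge in the prefix $P_j = \bigcup_{i\le j}(L_i\cup A_i\cup B_i\cup R_i)$ and also with an edge in the suffix $\oE - P_j$. I would argue: (1) $W_1 \subset W_2 \subset \dots$ up to a point and then decreasing — more precisely the $W_j$ form a ``path-like'' family because a vertex incident with edges in both $P_i$ and its complement stays a boundary vertex for a contiguous range of $j$'s, which is exactly the interpolation property (iii); combined with the fact that every vertex and every edge of $\ocG$ is covered (using $L_\cG\cup R_\cG = E$, $L_x\cup R_x$ covering $x$'s edges, and the pairing property to ensure each pair $l_{uv},r_{uv}$ lies in a common $W_j$), this makes $(W_1,\dots,W_t)$ a path-decomposition. (2) $\l_j = |W_j| - 1$, or at least $|W_j| - 1 \le \l_j$: here is where the closure property of Claim~\ref{re-ordering_claim}(b) is essential — because $A_j\cup B_j$ lies in the closure of the $L_i$'s already placed, the subgraph induced by $P_j$ is connected (all its edges hang off the star at $x$ through the $L$-edges), and likewise for the complement, so $r_M(P_j) = |V(P_j)| - 1$, $r_M(\oE - P_j) = |V(\oE - P_j)| - 1$, and then $\l_M(P_j) = r_M(P_j) + r_M(\oE-P_j) - |V| = |V(P_j)\cap V(\oE-P_j)| - 1 = |W_j| - 1$. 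Putting these together, $w_{\ocG}(W_1,\dots,W_t) = \max_j(|W_j|-1) = \max_j \l_j \le w_M(e_1^*,\dots,e_n^*) = \pw(M)$, so $\pw(\ocG) \le \pw(M)$, and combined with Lemma~\ref{pw_lemma} and the already-established $\pw(M)\le\pw(\cG)+1$ this finishes Proposition~\ref{pw_prop}.

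**The main obstacle** I anticipate is verifying the connectivity of the induced subgraphs — i.e., that $r_M(P_j) = |V(P_j)| - 1$ and the same for the complement. One must carefully use both parts of Claim~\ref{re-ordering_claim}: part (b) forces each $A_j\cup B_j$ edge to be spanned by the $L$-edges seen so far, and since every $L$-edge is incident with $x$, the whole prefix subgraph is connected to $x$; for the complement one uses that $\cl_M(R_x) = \oE$ together with the pairing property (every $l_{uv}$ placed late has its partner $r_{uv}$ also late, and $r_{xv}$-edges fill out the complement's connection to $x$). Handling the boundary cases — the first block, the last block, and vertices incident only with edges of $\ocG$ that all fall in one block — will require some care but no new ideas. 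A secondary, purely bookkeeping obstacle is confirming that the family $(W_1,\dots,W_t)$ genuinely satisfies (i)–(iii); (iii) follows because $P_1\subset P_2\subset\cdots$ are nested, so $V(P_i)\cap V(\oE-P_k) \subset V(P_j)$ whenever $i<j<k$, which forces $W_i\cap W_k\subset W_j$.
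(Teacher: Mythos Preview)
Your overall strategy is the paper's: apply Lemma~\ref{re-ordering_lemma} to obtain an optimal ordering with the structure of Claim~\ref{re-ordering_claim}, read off a path-decomposition of $\ocG$ from the block structure, and show that its width is bounded by a connectivity value achieved by the ordering. But your choice of cut points $P_j = \bigcup_{i\le j}(L_i\cup A_i\cup B_i\cup R_i)$ is the wrong one, and the resulting family $(W_1,\dots,W_t)$ is in general \emph{not} a path-decomposition of $\ocG$. The failure is in property~(i): a vertex $v \in V$ may have all of its incident edges in a single block. Indeed, if $l_{xv}\in L_j$, then $r_{xv}\in R_j$ by pairing, and Claim~\ref{re-ordering_claim}(b) forces each $\cG$-edge $l_{vw}$ into $A_k$ with $k=\max(j,j_w)$ where $l_{xw}\in L_{j_w}$; so whenever $j\ge j_w$ for every $\cG$-neighbour $w$ of $v$, every edge at $v$ sits in block~$j$, and then $v$ lies in no $W_k$. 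This is not a corner case: since $P_t=\oE$ you always have $W_t=\emptyset$, and every vertex $v$ with $l_{xv}\in L_t$ is missed by all of the $W_k$.

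The paper repairs this by cutting in the \emph{middle} of each block, setting $Y_j=\bigcup_{i<j}(L_i\cup A_i\cup B_i\cup R_i)\cup L_j\cup A_j$ and $V_j=V(\ocG[Y_j])\cap V(\ocG[\oE-Y_j])$. With this cut the pairing property works for you rather than against you: each $l$-edge in $L_j\cup A_j$ has its $r$-partner in $B_j\cup R_j\subset \oE-Y_j$, so both endpoints of every such edge are forced into $V_j$, and conditions (i)--(ii) follow at once. Your connectivity arguments for the two induced subgraphs are essentially correct and transfer to $Y_j$ and $\oE-Y_j$ with only minor changes (the paper's verification that $\ocG[\oE-Y_j]$ is connected uses both parts of Claim~\ref{re-ordering_claim}); only the location of the cut needs to change.
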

\begin{proof}
Let $(e_1,\ldots,e_n)$ be an optimal ordering of $\oE$. WLOG,
$(e_1,\ldots,e_n)$ may be assumed to be normal. Let $(e_1^*,\ldots,e_n^*)$
be the output of the Re-ordering Algorithm in the response
to the input $(e_1,\ldots,e_n)$. Then, $(e_1^*,\ldots,e_n^*)$ has
the properties listed in Claim~\ref{re-ordering_claim}, and, by 
Lemma~\ref{re-ordering_lemma}, is also an optimal ordering of $\oE$.

Now, $(e_1^*,\ldots,e_n^*)$ induces an ordered partition 
$(L_1,A_1,B_1,R_1,\ldots,L_t,A_t,B_t,R_t)$ of $\oE$, as in 
Claim~\ref{re-ordering_claim}(a). For $j \in [t]$, define $Y_j
= \bigcup_{i < j} (L_i \cup A_i \cup B_i \cup R_i) \cup (L_j \cup A_j)$,
and $Y_j' = \oE - Y_j \ = 
\bigcup_{i > j} (L_i \cup A_i \cup B_i \cup R_i) \cup (B_j \cup R_j)$.
Letting $\ocG[Y_j]$ and $\ocG[Y_j']$ denote the subgraphs of $\ocG$
induced by $Y_j$ and $Y_j'$, respectively, set 
$V_j = V(\ocG[Y_j]) \cap V(\ocG[Y_j'])$. In other words, 
$V_j$ is the set of vertices common to both $\ocG[Y_j]$ and $\ocG[Y_j']$.
It is easily checked that $\cV = (V_1,\ldots,V_t)$
is a path-decomposition of $\ocG$. Note that 
$$
|V_j| = |V(\ocG[Y_j])| + |V(\ocG[Y_j'])| - |\oV|.
$$

We next observe that $\ocG[Y_j]$ and $\ocG[Y_j']$ are connected graphs.
From Claim~\ref{re-ordering_claim}(b), we have that 
$Y_j \subset \cl_M(\bigcup_{i \leq j} L_i)$. Therefore, 
for any edge $l_{uv}$ (or $r_{uv}$) in $Y_j - \bigcup_{i \leq j} L_i$, 
both $l_{xu}$ and $l_{xv}$ must be in some $L_i$, $i \leq j$.
Thus, in $\ocG[Y_j]$, each vertex $v \neq x$ is adjacent to $x$,
which shows that $\ocG[Y_j]$ is connected. 

Consider any vertex $v \neq x$ in $\ocG[Y_j']$, such that 
$r_{xv} \notin Y_j'$. Then, $r_{uv} \in Y_j'$ for some $u \neq x$.
So, $r_{uv} \in B_k$ for some $k \geq j$. By Claim~\ref{re-ordering_claim}(b),
$r_{uv} \in \cl_M(\bigcup_{i \leq k} L_i) - \cl_M(\bigcup_{i < k} L_i)$.
This implies that either $l_{xu} \in L_k$ or $l_{xv} \in L_k$. Hence,
either $r_{xu} \in R_k$ or $r_{xv} \in R_k$. However, $r_{xv}$ cannot
be in $R_k$, since $r_{xv} \notin Y_j'$, and so, $r_{xu} \in R_k$. Thus,
$(r_{xu},r_{uv})$ forms a path in $\ocG[Y_j']$ from $x$ to $v$. 
It follows that $\ocG[Y_j']$ is connected.

Therefore,
\begin{eqnarray*}
\l_M(Y_j) &=& r_M(Y_j) + r_M(Y_j') - r_M(\oE) \\
&=& (|V(\ocG[Y_j])| - 1) + (|V(\ocG[Y_j'])| - 1) - (\oV-1)
= |V_j| - 1.
\end{eqnarray*}
Hence, 
$$
\pw(M) = w_M(e_1^*,\ldots,e_n^*) 
\geq \max_{j \in [t]} \l_M(Y_j) 
= \max_{j \in [t]} |V_j| - 1 = w_\ocG(\cV) \geq \pw(\ocG),
$$
which proves the lemma.
\end{proof} \mbox{} \\[-6pt]

The proof of Proposition~\ref{pw_prop} is now complete.

\section{Matroids of Bounded Pathwidth\label{bounded_pw_section}}

Theorem~\ref{pw_NP_thm} shows that the following decision 
problem is NP-complete. \\[-6pt]

\begin{tabular}{rl}
\textbf{Problem:} & \textsc{Matroid Pathwidth} \\
\multicolumn{2}{l}{Let $\F$ be a fixed field.} \\[2pt]
\textbf{Instance:} & An $m \times n$ matrix $A$ over $\F$, and 
an integer $w > 0$. \\
\textbf{Question:} & Is there an ordering $(e_1,\ldots,e_n)$
of the elements of $M = M[A]$, \\
& such that $w_M(e_1,\ldots,e_n) \leq w$? \\[4pt]
\end{tabular}

\noindent Similarly, Corollary~\ref{tw_NP_cor} shows that the
corresponding decision problem for code trellis-width (over
a fixed finite field $\F$) is NP-complete. 

In this section, we consider the situation when the parameter $w$ above
is a fixed constant, and therefore, not considered to be part of 
the problem instance. In contrast to the NP-completeness of 
\textsc{Matroid Pathwidth},
we believe that the following decision problem and its
coding-theoretic counterpart are solvable in polynomial time. \\[-6pt]

\begin{tabular}{rl}
\textbf{Problem:} & \textsc{Weak Matroid Pathwidth} \\
\multicolumn{2}{l}{Let $\F_q = GF(q)$ be a fixed finite field, 
and $w$ a fixed positive integer.} \\[2pt]
\textbf{Instance:} & An $m \times n$ matrix $A$ over $\F_q$. \\
\textbf{Question:} & Is there an ordering $(e_1,\ldots,e_n)$
of the elements of $M = M[A]$, \\
& such that $w_M(e_1,\ldots,e_n) \leq w$? \\[4pt]
\end{tabular}

Our optimism above stems from the fact that 
the property of having pathwidth bounded by $w$ is 
preserved by the minors of a matroid. 
To be precise, let $\cP_{w,q}$ be the class of matroids 
representable over the finite field $\F_q = GF(q)$, that have
pathwidth at most $w$. By Lemma~\ref{minor_pathwidth_lemma},
$\cP_{w,q}$ is minor-closed. Since pathwidth is an upper
bound on the branchwidth of a matroid, all matroids in $\cP_{w,q}$
have branchwidth at most $w$. Now, Geelen and Gerards have shown 
that if $\cM$ is any minor-closed class of $\F_q$-representable matroids 
having bounded branchwidth, then $\cM$ has finitely many excluded minors
\cite[Theorem~1.4]{GW02}. As a result, we have the following theorem.
\\[-6pt]

\begin{theorem}
For any integer $w > 0$ and finite field $\F_q$,
$\cP_{w,q}$ has finitely many excluded minors.
Consequently, the code family
$$
\mfC(\cP_{w,q}) = \{\cC:\ \cC \text{ is a linear code over $\F_q$ such that }
                               \tw(\cC) \leq w\}.
$$
also has finitely many excluded minors. \\[-6pt]
\label{Pwq_thm}
\end{theorem}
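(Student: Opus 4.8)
The plan is to invoke the Geelen--Gerards theorem (cited as \cite[Theorem~1.4]{GW02}) as a black box and assemble the pieces that have already been established in the excerpt. First I would note that $\cP_{w,q}$ is a class of $\F_q$-representable matroids, and that it is minor-closed: this is exactly Lemma~\ref{minor_pathwidth_lemma}, which says that if $N$ is a minor of $M$ then $\pw(N) \leq \pw(M)$, so that membership in $\cP_{w,q}$ is inherited by minors. Second, I would observe that every matroid in $\cP_{w,q}$ has branchwidth at most $w$; this follows from the computation given at the end of Section~\ref{pw_section}, where it is shown via submodularity of $\l_M$ that the branch-decomposition of Figure~\ref{branch_decomp_fig} associated to an optimal ordering has width equal to $\pw(M)$, hence branchwidth $\leq \pw(M) \leq w$ for any $M \in \cP_{w,q}$. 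Thus $\cP_{w,q}$ is a minor-closed class of $\F_q$-representable matroids of bounded branchwidth, and the Geelen--Gerards theorem applies directly to yield finitely many excluded minors for $\cP_{w,q}$.

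For the second assertion, I would use the machinery set up in Section~\ref{codes_matroids_section}. By definition, $\tw(\cC) = \pw(M(\cC))$, so $\mfC(\cP_{w,q}) = \{\cC : M(\cC) \in \cP_{w,q}\}$ is precisely the code family $\mfC(\cP_{w,q})$ in the sense of \eqref{CM_def}. Since $\cP_{w,q}$ is minor-closed (as just established), the remarks following \eqref{CM_def} show that $\mfC(\cP_{w,q})$ is minor-closed, and moreover that if $\cF$ is the (finite) set of excluded minors for $\cP_{w,q}$, then $\mfC(\cF)$ is the set of excluded minors for $\mfC(\cP_{w,q})$. Each matroid in $\cF$ is $\F_q$-representable (being an excluded minor for a class of $\F_q$-representable matroids, every proper minor is representable, and in fact the excluded minors themselves are representable by the structure of the Geelen--Gerards argument — or one simply takes $\mfC(\cF)$ to collect all codes whose associated matroid lies in $\cF$), so $\mfC(\cF)$ is a finite union of code-equivalence classes, hence finite up to equivalence. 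Therefore $\mfC(\cP_{w,q})$ has finitely many excluded minors.

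There is essentially no obstacle here: the theorem is a corollary, and the only point requiring a sentence of care is that "finitely many excluded minors" for the code family is understood up to code equivalence, since each excluded minor of $\cP_{w,q}$ is an isomorphism class of matroids and thus corresponds to finitely many equivalence classes of codes over $\F_q$ (a matroid has only finitely many inequivalent $\F_q$-representations, and each representation determines an equivalence class of codes). If anything is the "hard part," it is merely checking that the passage from excluded minors of $\cM$ to excluded minors of $\mfC(\cM)$ in \eqref{CM_def} genuinely preserves finiteness in the coding sense, which the preceding remark handles.
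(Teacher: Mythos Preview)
Your proposal is correct and matches the paper's approach exactly: the paper derives the theorem in the paragraph immediately preceding its statement by combining Lemma~\ref{minor_pathwidth_lemma} (minor-closedness), the branchwidth bound at the end of Section~\ref{pw_section}, and \cite[Theorem~1.4]{GW02} as a black box, with the code-family consequence following from the machinery around \eqref{CM_def}. One small correction to your aside: excluded minors of $\cP_{w,q}$ need not themselves be $\F_q$-representable (for instance $U_{2,4}$ when $q=2$), but as your hedge already indicates, this is harmless since non-representable members of $\cF$ simply contribute nothing to $\mfC(\cF)$ and the finiteness conclusion is unaffected.
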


Theorem~\ref{Pwq_thm} shows that deciding whether or not a given 
$\F_q$-representable matroid $M$ belongs to $\cP_{w,q}$ can be 
accomplished by testing whether or not $M$ contains as a minor 
one of the finitely many excluded minors of $\cP_{w,q}$. The 
Minor-Recognition Conjecture of Geelen, Gerards and Whittle 
\cite[Conjecture~1.3]{GGW} states that, for any fixed $\F_q$-representable
matroid $N$, testing a given $\F_q$-representable matroid for the 
presence of an $N$-minor can be done in polynomial time. 
So, if this conjecture is true --- and there is evidence to
support its validity \cite{GGW} --- then membership
of an $\F_q$-representable matroid in the class $\cP_{w,q}$ can be
decided in polynomial time. Hence, assuming the validity of the
Minor-Recognition Conjecture, \textsc{Weak Matroid Pathwidth}
is solvable in polynomial time.

While the finiteness of the list of excluded minors for $\cP_{w,q}$
implies, modulo the Minor-Recognition Conjecture, the existence
of a polynomial-time algorithm for \textsc{Weak Matroid Pathwidth},
an actual implementation of such an algorithm would 
require the explicit determination of the excluded minors. As a
relatively easy exercise, we prove the following theorem.\\[-6pt]

\begin{theorem}
A matroid is in $\cP_{1,q}$ iff it contains no minor isomorphic
to any of the matroids $U_{2,4}$, $M(K_4)$, $M(K_{2,3})$ and $M^*(K_{2,3})$.
\\[-6pt]
\label{P1q_thm}
\end{theorem}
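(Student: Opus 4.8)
The plan is to characterize the matroids of pathwidth at most one directly in structural terms, and then check that the candidate excluded minors are exactly the minor-minimal obstructions. First I would unwind the definition: a matroid $M$ has $\pw(M) \leq 1$ iff there is an ordering $(e_1,\ldots,e_n)$ of $E(M)$ with $\l_M(e_1,\ldots,e_i) \leq 1$ for all $i$. Since $\l_M$ takes nonnegative integer values and changes by at most $1$ when a single element is appended (this is immediate from submodularity, exactly as in the branchwidth discussion following Figure~\ref{branch_decomp_fig}), the condition $\l_M \leq 1$ along the ordering is the statement that the ``boundary'' between each initial segment and its complement has connectivity at most one at every stage. The natural guess is that $\pw(M) \leq 1$ is equivalent to $M$ being, up to loops and coloops, a series-parallel-type object: more precisely, that the connected components of the simplification of $M$ can each be arranged so that $M$ is built by a sequence of operations each of which glues on one element across a rank-$1$ separator. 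I would first prove the ``only if'' direction of the excluded-minor list — that each of $U_{2,4}$, $M(K_4)$, $M(K_{2,3})$, $M^*(K_{2,3})$ has pathwidth $2$ — which is a finite check: for each, one verifies that every ordering of its (at most six) elements forces some initial segment to have $\l = 2$. Combined with Lemma~\ref{minor_pathwidth_lemma}, this shows any matroid in $\cP_{1,q}$ excludes all four as minors.

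For the converse — the substantive direction — I would argue contrapositively: if $M$ is $\F_q$-representable and $\pw(M) \geq 2$, then $M$ has one of the four listed matroids as a minor. The key step is to reduce to $3$-connected (in the usual sense) pieces: using the behavior of pathwidth under direct sums, $\pw(M_1 \oplus M_2) = \max\{\pw(M_1), \pw(M_2)\}$, and an analogous statement for $2$-sums (which I would need to establish, or at least the weaker fact that a matroid of pathwidth $\geq 2$ with a $2$-separation has a component-summand of pathwidth $\geq 2$), I can pass to a $3$-connected minor of pathwidth $\geq 2$. Then I invoke the fact that pathwidth bounds branchwidth from below, so such a minor has branchwidth $\geq 2$; matroids of branchwidth at most $1$ are exactly direct sums of loops, coloops, and parallel classes — equivalently, $U_{1,n}$'s and $U_{n-1,n}$'s and their duals — so a $3$-connected matroid of branchwidth $\geq 2$ has branchwidth exactly $2$ (if its branchwidth were larger it would contain the branchwidth-$3$ obstructions, which is more than we need). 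The remaining task is: a $3$-connected $\F_q$-representable matroid with $\pw \geq 2$ contains one of $U_{2,4}, M(K_4), M(K_{2,3}), M^*(K_{2,3})$ as a minor. Here I would use Tutte's Wheels-and-Whirls theorem, or more efficiently the known splitter/structure results for matroids of branchwidth $2$: a connected matroid has branchwidth at most $2$ iff it has no $M(K_4)$-minor, and among branchwidth-$\leq 2$ matroids one must additionally forbid $U_{2,4}$ for $\F_q$-representability reasoning to bite — so I would show that a $3$-connected matroid that is not a wheel/whirl and has more than a bounded number of elements already contains $M(K_4)$, $M(K_{2,3})$, or $M^*(K_{2,3})$, while the wheels, whirls, and small cases are handled by hand (the whirls and $U_{2,4}$-containing cases give $U_{2,4}$; the larger wheels give $M(K_4)$ and then $M(K_{2,3})$ by contracting/deleting rim/spoke elements).

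The main obstacle I expect is the passage through $2$-separations and the base-case bookkeeping: I need a clean lemma saying that pathwidth is controlled by the $3$-connected components of a matroid (the analogue for pathwidth of the fact that branchwidth equals the max branchwidth over $3$-connected components — this requires care because pathwidth, unlike branchwidth, is genuinely a path-like and not tree-like parameter, so "$2$-sum of two pathwidth-$1$ matroids has pathwidth $1$" is true but needs the right gluing argument on orderings), together with a careful finite analysis of all $3$-connected matroids on at most, say, seven elements to confirm that $M(K_{2,3})$ and its dual are needed and that no fifth excluded minor of that size exists. Everything downstream of that — the finite verifications that the four matroids have pathwidth $2$, and the reduction of the general case to the $3$-connected case — is routine. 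I would organize the writeup as: (1) the submodularity increment lemma; (2) pathwidth under direct sums and $2$-sums; (3) the four matroids each have pathwidth $2$; (4) reduction to $3$-connected pathwidth-$\geq 2$ minors; (5) such minors contain one of the four, via branchwidth-$2$ structure and a short case check; (6) conclude.
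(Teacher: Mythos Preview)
Your reduction to $3$-connected pieces fails, and the failure is witnessed by one of the very matroids on your excluded-minor list. You assert that ``$2$-sum of two pathwidth-$1$ matroids has pathwidth $1$,'' and propose to use this to pass from a matroid of pathwidth $\geq 2$ to a $3$-connected minor of pathwidth $\geq 2$. But $M(K_{2,3})$ is a counterexample: taking the $2$-separation $\{a1,b1\}$ versus $\{a2,b2,a3,b3\}$ (with $K_{2,3}$ on parts $\{a,b\}$ and $\{1,2,3\}$), one gets $M(K_{2,3}) \cong U_{2,3} \oplus_2 M(K_4 \del e)$. Both summands have pathwidth $1$ (for $M(K_4 \del e)$, the ordering $13,23,12,24,14$ works, with the obvious labelling), yet $\pw(M(K_{2,3})) = 2$ as you yourself verify. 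More generally, the $3$-connected components in the canonical tree decomposition of $M(K_{2,3})$ are all copies of $U_{2,3}$. So pathwidth is \emph{not} controlled by $3$-connected components; this is exactly why $M(K_{2,3})$ and $M^*(K_{2,3})$ show up as excluded minors despite not being $3$-connected. (The paper remarks explicitly, in discussing $\cP_{2,q}$, that $\cP_{w,q}$ is not closed under $2$-sums.) Your parenthetical worry that pathwidth is ``path-like and not tree-like'' is on target, but the conclusion you draw from it is the wrong one.

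The paper takes a completely different route that sidesteps the $2$-sum issue. Rather than decomposing $M$, it exploits the excluded minors themselves to force $M$ to be graphic and planar: with no $U_{2,4}$- or $M(K_4)$-minor, $M$ is binary with no $F_7$, $F_7^*$, $M(K_5)$, $M^*(K_5)$, $M(K_{3,3})$, $M^*(K_{3,3})$ minor, hence $M = M(\cG)$ for a planar $\cG$. Excluding $M(K_4)$ and $M(K_{2,3})$ then forces $\cG$ to be outerplanar (Chartrand--Harary), and excluding $M^*(K_{2,3})$ constrains the geometric dual $\cG^*$ enough that $\cG^*$ is an ``umbrella'' (a cycle with one hub vertex joined by possibly multiple parallel edges to the others). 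A direct ordering argument shows every umbrella has pathwidth $\leq 1$, and duality finishes. The key point is that the argument stays at the level of $2$-connected (not $3$-connected) graphs throughout.
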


We first verify the easy ``only if'' part of the above theorem.
\\[-6pt]

\begin{proposition}
$U_{2,4}$, $M(K_4)$, $M(K_{2,3})$ and $M^*(K_{2,3})$
are not in $\cP_{1,q}$. \\[-6pt]
\label{TC1_prop1}
\end{proposition}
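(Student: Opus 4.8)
The strategy is to exhibit, for each of the four matroids $N \in \{U_{2,4}, M(K_4), M(K_{2,3}), M^*(K_{2,3})\}$, an explicit calculation showing that every ordering of $E(N)$ has width at least $2$, equivalently that $\pw(N) \geq 2$. Since $\pw(M) = \pw(M^*)$, the case of $M^*(K_{2,3})$ reduces immediately to that of $M(K_{2,3})$, so only three genuine computations are needed. For each of these, I would argue that in any ordering $(e_1,\ldots,e_n)$, there is a prefix $\{e_1,\ldots,e_i\}$ with connectivity $\l_N(e_1,\ldots,e_i) \geq 2$; combined with the trivial upper bound (these matroids do have pathwidth exactly $2$, but only the lower bound is needed here), this gives $N \notin \cP_{1,q}$.

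For $U_{2,4}$: it has rank $2$, four elements, and every $2$-element subset has rank $2$. Take any ordering; the prefix $X = \{e_1,e_2\}$ has $r_N(X) = 2$ and $r_N(E-X) = r_N(\{e_3,e_4\}) = 2$, while $r_N(E) = 2$, so $\l_N(X) = 2 + 2 - 2 = 2$. Hence every ordering has width $\geq 2$. For $M(K_4)$: this is the cycle matroid of $K_4$, which has rank $3$ and six elements; I would check that for any ordering, the prefix of the first three elements already forces $\l \geq 2$, or failing that the prefix of length three or four does — concretely, one shows that no partition of $E(K_4)$ into two parts of a ``path-like'' nested structure can keep the connectivity $\leq 1$ throughout, because $K_4$ is $3$-connected as a graph in a sense that makes its cycle matroid have $\l_M \geq 2$ on any ``middle'' cut. (Cleaner: $M(K_4)$ has branchwidth $3$ under the standard convention, i.e.\ $2$ under ours, and pathwidth is at least branchwidth, but since the excerpt's branchwidth convention matches, I would just do the direct case check over the finitely many essentially-different orderings up to the symmetry group $S_4$ acting on $K_4$.) For $M(K_{2,3})$: rank $4$, six elements; again a short case analysis on orderings, using the bipartite structure, shows some prefix has $\l \geq 2$.

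The main obstacle is purely bookkeeping: for $M(K_4)$ and $M(K_{2,3})$ one must verify the lower bound $\pw \geq 2$ without an excessive enumeration of orderings. The way I would keep this short is to use the graph-theoretic interpretation of the connectivity function — for a cycle matroid $M(G)$ and an edge subset $X$, $\l_{M(G)}(X)$ equals the number of vertices of $G$ incident with edges in both $X$ and $E(G) - X$, minus the number of connected components of $G$ plus (suitably counted) components of the induced subgraphs — and then observe that an ordering of the edges of width $\leq 1$ would yield, essentially, a path-decomposition of $G$ of width $1$, i.e.\ $G$ would be a disjoint union of paths. Since neither $K_4$ nor $K_{2,3}$ is a disjoint union of paths (both contain vertices of degree $\geq 3$), this is immediate. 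For $U_{2,4}$ the one-line computation above suffices. I would then remark that $M^*(K_{2,3}) \notin \cP_{1,q}$ follows since $\pw(M^*(K_{2,3})) = \pw(M(K_{2,3})) \geq 2$, and note that $U_{2,4}$ is representable over $\F_q$ only when $q \geq 3$ but is listed as an excluded minor regardless — when $q = 2$ it simply is not $\F_q$-representable and so vacuously causes no issue, while the other three are regular and hence representable over every field. This completes the ``only if'' direction.
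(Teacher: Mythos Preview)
Your treatment of $U_{2,4}$ and the reduction of $M^*(K_{2,3})$ to $M(K_{2,3})$ via $\pw(M) = \pw(M^*)$ are exactly what the paper does. For $M(K_4)$ and $M(K_{2,3})$ you also name the right idea---examine the $3$-element prefix in an arbitrary ordering---but you do not carry it out, and the paper's execution is two lines in each case. For $M(K_4)$: any $3$-subset of $E(K_4)$ has rank at least $2$, with equality iff it is a triangle, in which case its complement is a triad of rank $3$; hence $r_M(\{e_1,e_2,e_3\}) + r_M(\{e_4,e_5,e_6\}) \geq 5$ and $\l_M(e_1,e_2,e_3) \geq 2$. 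For $M(K_{2,3})$: the girth is $4$, so every $3$-subset is independent of rank $3$, giving $\l_M(e_1,e_2,e_3) = 3+3-4 = 2$.

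Your proposed graph-theoretic shortcut, however, does not work. You assert that $\pw(M(\cG)) \leq 1$ would yield a path-decomposition of $\cG$ of width $1$, making $\cG$ a disjoint union of paths. But the implication from matroid pathwidth to graph pathwidth fails already for trees: if $\cG$ is any tree then $M(\cG) \cong U_{n,n}$ has pathwidth $0$, while trees can have arbitrarily large graph pathwidth---the paper makes exactly this observation at the start of Section~\ref{NP_section}. There is no direct passage from small $\pw(M(\cG))$ to small $\pw(\cG)$; indeed, the whole purpose of the paper's $\ocG$ construction is to engineer a graph for which such a relation can be established. (Separately, graph pathwidth $\leq 1$ characterizes caterpillar forests, not path forests, though that would not have mattered here since $K_4$ and $K_{2,3}$ contain cycles.) Drop the shortcut and just do the two-line rank computation on the middle cut.
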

\begin{proof}
If $(e_1,e_2,e_3,e_4)$ is any ordering of the elements of $M = U_{2,4}$,
then $\l_M(e_1,e_2) = r_M(e_1,e_2) + r_M(e_3,e_4) - \rank(M)
= 2+2-2 = 2$. It follows that $\pw(U_{2,4}) = 2$. 

Now consider $M = M(K_4)$. For any ordering $(e_1,\ldots,e_6)$ of 
$E(K_4)$, we have $r_M(e_1,e_2,e_3) \geq 2$, with equality 
iff $\{e_1,e_2,e_3\}$ is a triangle, in which case $\{e_4,e_5,e_6\}$ 
is a triad. It follows that $r_M(e_1,e_2,e_3) + r_M(e_4,e_5,e_6) \geq 5$. 
Hence, $w_M(e_1,\ldots,e_6) \geq \l_M(e_1,e_2,e_3) \geq 2$.

The proof for $M = M(K_{2,3})$ is very similar.
For any $J \subset E(K_{2,3})$ with $|J|=3$, $r_M(J) = 3$, since
$K_{2,3}$ has no circuits of size less than 4.
Therefore, for any ordering $(e_1,\ldots,e_6)$ of $E(K_{2,3})$,
$w_M(e_1,\ldots,e_6) \geq \l_M(e_1,e_2,e_3) = 3+3 - 4 = 2$. 
Thus, $\pw(M(K_{2,3})) \geq 2$, and by duality, 
$\pw(M^*(K_{2,3})) \geq 2$ as well.
\end{proof} \mbox{} \\[-6pt]

We now prove the ``if'' part of Theorem~\ref{P1q_thm}. For the duration
of the proof, we take $M$ to be a matroid that contains 
no minor isomorphic to the matroids listed in the statement of the 
theorem. Since $M(K_4)$ is a minor of each of the matroids
$F_7$, $F_7^*$, $M(K_5)$, $M^*(K_5)$, $M(K_{3,3})$ and $M^*(K_{3,3})$,
$M$ contains none of these as minors. Therefore,
$M = M(\cG)$ for some planar graph $\cG$ 
(cf.\ \cite[Theorem~13.3.1 and Proposition~5.2.6]{oxley}). Evidently,
we may take $\cG$ to be connected as a graph.

Since $\cP_{1,q}$ is closed under direct sums, we may assume
that $M$ is 2-connected. Therefore, $\cG$ is either a graph 
consisting of a single vertex with a self-loop incident with it, 
or $\cG$ is a loopless graph. In the former case, 
$M \cong U_{0,1}$, which is in $\cP_{1,q}$. So, we may 
assume that $\cG$ is loopless. If $\cG$ has exactly two vertices,
then $M \cong U_{1,n}$ for some $n$, which is also in $\cP_{1,q}$.
Hence, we may assume that $|V(\cG)| \geq 3$, in which case,
$\cG$ is 2-connected as a graph \cite[Corollary~8.2.2]{oxley}.
Moreover, if $\cG^*$ is any geometric dual of $\cG$,
then, since $M^* = M(\cG^*)$ is 2-connected, by the same
argument as above, we may assume that $\cG^*$ is also 
2-connected as a graph.

\begin{figure}[t]
\centering{\epsfig{file=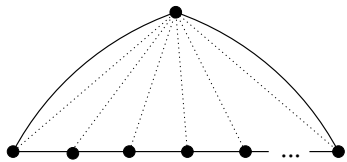, width=4.5cm}}
\caption{An ``umbrella'' graph. A dotted line between a pair
of vertices represents zero or more parallel edges between them.}
\label{umbrella}
\end{figure}

At this point, we need the following definition. We call a graph
an \emph{umbrella} if it is of the form shown in Figure~\ref{umbrella}.
Formally, an umbrella is a graph $H$ that consists of a circuit
on $m+1$ vertices $u_0,u_1,\ldots,u_m$, and in addition, for each 
$i \in [m]$, zero or more parallel edges between $u_0$ and $u_i$.
Note that $H - u_0$ is a simple path, where $H - u_0$ denotes
the graph obtained from $H$ by deleting the vertex $u_0$ and all
edges incident with it.

Returning to our proof, we have $M = M(\cG)$ for a loopless,
2-connected, planar graph $\cG$, such that any geometric dual
of $\cG$ is also 2-connected. \\[-6pt]

\begin{lemma}
$\cG$ has a geometric dual $\cG^*$ that is isomorphic to an umbrella. 
\\[-6pt]
\label{umb_lemma1}
\end{lemma}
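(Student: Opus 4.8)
The plan is to show that the constraints accumulated so far on $\cG$ — loopless, $2$-connected, planar, and with $2$-connected geometric dual — together with the exclusion of $M(K_{2,3})$ and $M^*(K_{2,3})$ as minors, force $\cG$ to be the geometric dual of an umbrella. Fix a planar embedding of $\cG$ and let $\cG^*$ be the corresponding geometric dual; $\cG^*$ is loopless (since $\cG$ is $2$-connected, hence bridgeless) and $2$-connected. The goal is to prove $\cG^*$ is an umbrella, i.e.\ $\cG^* - u_0$ is a simple path for some vertex $u_0$ of $\cG^*$ and all other edges of $\cG^*$ join $u_0$ to the path. I would first observe that if $\cG^*$ is \emph{not} an umbrella, then $\cG^*$ contains as a minor one of a small list of obstructions, and translating through planar duality I want those obstructions to map to $M(K_{2,3})$ or $M^*(K_{2,3})$ (or something containing one of them).

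The key step is the graph-theoretic characterization of umbrellas by excluded minors. Concretely, I would argue: a loopless $2$-connected planar graph $H$ fails to be an umbrella precisely when it has a minor isomorphic to $K_{2,3}$, to the ``prism'' (i.e.\ $K_3 \times K_2$, whose cycle matroid is $M^*(K_{2,3})$), or to the graph obtained from two vertices joined by three internally-disjoint paths each of length $\ge 2$ with an extra chord — in any case, one of these obstructions has cycle matroid containing $M(K_{2,3})$ or $M^*(K_{2,3})$ as a minor. A cleaner route: since $\cG$ itself is $2$-connected, planar, loopless with $2$-connected dual and excludes both $M(K_{2,3})$ and $M^*(K_{2,3})$ as minors, and since a simple $3$-connected planar graph on $\ge 5$ vertices always contains $K_{2,3}$ or the prism as a topological minor, $\cG$ cannot be $3$-connected beyond a bounded size; analyzing the $2$-separations of $\cG$ (equivalently, its tree of $3$-connected components) one sees $\cG$ is built from at most one ``large'' piece, and that piece must be the dual of an umbrella. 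I would phrase this using the structure of parallel classes: contract each parallel class of $\cG^*$ to get a simple graph $\widehat{\cG^*}$; show $\widehat{\cG^*}$ must be a ``fan'' (a path plus a dominating vertex) because anything else contains $K_{2,3}$ as a minor, and a fan's planar dependence on parallel classes, plus the no-$M^*(K_{2,3})$ condition, pins down $\cG^*$ as an umbrella.

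So the concrete steps, in order: (1) fix an embedding and pass to the dual $\cG^*$, recording that it is loopless and $2$-connected; (2) let $H$ be the simple graph underlying $\cG^*$ and show $H$ is planar, $2$-connected, and contains no $K_{2,3}$-minor, hence (by a known characterization) $H$ is either a cycle, a path, or a fan — actually, the relevant minor-minimal obstruction list for ``outerplanar-like'' structure gives that $H$ is a subgraph of a fan; (3) feed in the exclusion of $M^*(K_{2,3})$ (the prism) to rule out the cases where re-inflating parallel classes would create a prism minor, which forces $H$ to be a fan with apex $u_0$; (4) conclude that $\cG^*$, the inflation of $H$ by parallel classes, is exactly an umbrella with hub $u_0$, since the non-apex edges form a path and the only allowed multiedges are between $u_0$ and the path — any multiedge inside the path together with the fan structure would give a $K_{2,3}$- or prism-minor. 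The main obstacle I anticipate is step (2)–(3): getting the exact excluded-minor characterization of ``fan'' (or of graphs whose duals are umbrellas) correct, since the naive statement ``no $K_{2,3}$-minor implies fan'' is false (outerplanar graphs are more general), and one genuinely needs to use \emph{both} that $\cG$ and its dual are $2$-connected \emph{and} both $M(K_{2,3})$ and $M^*(K_{2,3})$ are excluded; reconciling these four conditions to land exactly on umbrellas — rather than on a slightly larger class — is the delicate part of the argument.
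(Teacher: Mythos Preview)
Your proposal has a genuine gap: you never use the exclusion of $M(K_4)$, and that is precisely the missing ingredient that makes the argument work. You try to pin down $\cG^*$ as an umbrella using only the exclusion of $M(K_{2,3})$ and $M^*(K_{2,3})$ plus $2$-connectedness on both sides, and you correctly sense in your final paragraph that this is not enough --- ``no $K_{2,3}$-minor implies fan'' is false, and your suggested repair via prism minors and parallel-class inflation is vague and does not obviously close the gap.

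The paper's route is much shorter and avoids all of this. Since $M(\cG)$ has no $M(K_4)$- or $M(K_{2,3})$-minor, $\cG$ has no $K_4$- or $K_{2,3}$-minor, so by Chartrand--Harary $\cG$ is \emph{outerplanar}. Take an outerplanar embedding and let $\cG^*$ be its geometric dual, with $x$ the vertex corresponding to the unbounded face. A known result on duals of outerplanar graphs (Fleischner--Geller--Harary) gives that $\cG^* - x$ is a forest; $2$-connectedness of $\cG^*$ upgrades this to a tree. Only now does the $M^*(K_{2,3})$ hypothesis enter: if $\cG^* - x$ had a vertex of degree $\geq 3$, one easily builds a $K_{2,3}$-minor in $\cG^*$, contradicting that $M(\cG^*)=M^*$ has no $M(K_{2,3})$-minor. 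Hence $\cG^* - x$ is a path, its endpoints are adjacent to $x$ by $2$-connectedness, and $\cG^*$ is an umbrella with hub $x$. The point you missed is that outerplanarity hands you the hub vertex $u_0$ for free as the outer-face vertex, so you never need an excluded-minor characterization of fans or umbrellas.
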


We prove the lemma using the concept of an outerplanar graph. 
A planar graph is said to be \emph{outerplanar} if it has a 
planar embedding in which every vertex lies on the exterior 
(unbounded) face. We will refer to such a planar embedding of the graph
as an \emph{outerplanar embedding}. Outerplanar graphs were 
characterized by Chartrand and Harary \cite{CH67} as graphs that do not 
contain $K_4$ or $K_{2,3}$ as a minor. \\[-6pt]

\emph{Proof of Lemma~\ref{umb_lemma1}\/}: Since $M(\cG)$ contains no 
$M(K_4)$- or $M(K_{2,3})$-minor, 
$\cG$ cannot contain $K_4$ or $K_{2,3}$ as a minor. Therefore,
by the Chartrand-Harary result mentioned above, $\cG$ is outerplanar. 
Let $\cG^*$ be the geometric dual of an outerplanar embedding of $\cG$. 

Let $x$ be the vertex of $\cG^*$ corresponding to the exterior face
of the outerplanar embedding of $\cG$. By a result of Fleischner 
\emph{et al.}\ \cite[Theorem~1]{FGH74}, $\cG^* - x$ is a forest.
In fact, since $\cG^*$ is 2-connected, $\cG^* - x$ is a tree.

We claim that no vertex of $\cG^* - x$ has degree greater than two,
and hence, $\cG^* - x$ is a simple path. Indeed, suppose
that $\cG^* - x$ has a vertex $u$ adjacent to three
other vertices $v_1,v_2,v_3$. Since $G^*$ is 2-connected,
there are paths $\pi_1$, $\pi_2$ and $\pi_3$ in $\cG^*$ from
$v_1$, $v_2$ and $v_3$, respectively, to $x$ that do not pass 
through $u$. Also, since $\cG^* - x$ is a tree, these paths
must be internally disjoint in $\cG^*$. The graph $\cG^*$ thus has
a subgraph as depicted in Figure~\ref{deg3_obs}. But this subgraph 
is obviously contractible to $K_{2,3}$, and hence $\cG^*$ has $K_{2,3}$
as a minor. However, this is impossible, as $M^* = M(\cG^*)$
does not have $M(K_{2,3})$ as a minor. 

\begin{figure}[!t]
\centering{\epsfig{file=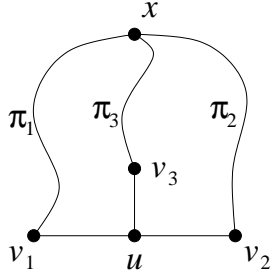, width=3.5cm}}
\caption{If $\cG^* - x$ has a vertex of degree at least 3,
then $\cG^*$ has a $K_{2,3}$ minor.}
\label{deg3_obs}
\end{figure}

Thus, $\cG^* - x$ is a simple path. The two degree-one vertices 
(end-points) of this path must be adjacent to $x$ in $\cG^*$;
otherwise, $\cG^*$ is not 2-connected. It follows that $\cG^*$ is 
isomorphic to an umbrella. \endproof \mbox{}\\[-6pt]

To complete the proof of Theorem~\ref{P1q_thm}, we 
show that $M(\cG^*) \in \cP_{1,q}$, so that by duality,
$M = M^*(\cG^*) \in \cP_{1,q}$. This is done by the following lemma. 

\begin{lemma}
If $H$ is an umbrella, then $M(H) \in \cP_{1,q}$.
\label{umb_lemma2}
\end{lemma}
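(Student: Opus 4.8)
The claim is that if $H$ is an umbrella, then $M(H) \in \cP_{1,q}$, i.e., $\pw(M(H)) \le 1$. The plan is to exhibit an explicit ordering of the edge set $E(H)$ whose width (with respect to the connectivity function $\l_{M(H)}$) is at most $1$. Recall from the definition of an umbrella that $H$ consists of a circuit on vertices $u_0,u_1,\ldots,u_m$ together with, for each $i \in [m]$, some number (possibly zero) of parallel edges joining $u_0$ to $u_i$. The key structural feature is that $H - u_0$ is a simple path $u_1 u_2 \cdots u_m$, so $H$ is built by ``sweeping'' along this path while keeping $u_0$ as a hub.

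First I would set up the ordering. Process the vertices $u_1, u_2, \ldots, u_m$ in this path order. When we come to $u_i$, we add all edges of $H$ incident to $u_i$ that have not yet been listed: these are the parallel $u_0$–$u_i$ edges (if any), the path edge $u_{i-1}u_i$ (for $i \ge 2$; and the circuit edge $u_0 u_1$ is listed first with $u_1$, the circuit edge $u_m u_0$ last with $u_m$), and we leave the edge $u_i u_{i+1}$ to be picked up at the next stage. So the ordering groups edges into blocks $B_1, B_2, \ldots, B_m$, where $B_i$ consists of the parallel $u_0 u_i$ edges together with the edge $u_{i-1} u_i$ (and $B_1$ also contains $u_0 u_1$). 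Let $X_i = B_1 \cup \cdots \cup B_i$ be a prefix union.

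Next I would bound the connectivity of each prefix. Write $M = M(H)$; since $H$ is connected, $r_M(E(H)) = |V(H)| - 1 = m+1$ (using $|V(H)| = m+2$, vertices $u_0,\ldots,u_m$ — wait, there are $m+1$ vertices $u_0,\ldots,u_m$, so $r_M(E(H)) = m$). The subgraph $H[X_i]$ induced by the edges in $X_i$ has vertex set $\{u_0, u_1, \ldots, u_i\}$ and is connected (it contains the path $u_0 u_1 u_2 \cdots u_i$), so $r_M(X_i) = i$. The complementary edge set $E(H) - X_i$ induces a subgraph on vertices $\{u_0, u_i, u_{i+1}, \ldots, u_m\}$ — it contains all edges touching $u_{i+1}, \ldots, u_m$ plus the circuit edge $u_m u_0$ — and this subgraph is also connected (via the path $u_0 u_m u_{m-1} \cdots u_{i+1} u_i$ or through the spokes to $u_0$), so $r_M(E(H) - X_i) = m - i + 1$. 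Hence $\l_M(X_i) = i + (m - i + 1) - m = 1$. For a prefix that ends in the middle of a block $B_i$ — after the spoke edges but before $u_{i-1}u_i$, or vice versa — one checks the induced subgraphs have vertex sets differing from the above by at most the single vertex $u_i$ being present or absent on one side, so $\l_M$ of such a prefix is at most $1$ as well; and the very first prefix (a single edge) has $\l_M$ value $\le 1$ trivially. Therefore $w_M$ of this ordering is at most $1$, so $\pw(M(H)) \le 1$ and $M(H) \in \cP_{1,q}$ (noting $M(H)$ is graphic hence $\F_q$-representable).

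The main obstacle — really the only place requiring care — is the bookkeeping for prefixes that cut a block $B_i$ in the middle: one must verify in each such sub-case that the two induced subgraphs are connected (or that any disconnection costs nothing in rank), so that $\l_M$ stays $\le 1$. This is a short finite case analysis on the internal order of the few edges within a single $B_i$, and it is most cleanly handled by choosing, inside each block, to list the spoke edges $u_0 u_i$ before the path edge $u_{i-1} u_i$ (so that after the spokes, vertex $u_i$ is already attached to the hub $u_0$, keeping both sides connected). Everything else is the routine rank computation above.
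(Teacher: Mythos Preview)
Your overall strategy---exhibit an explicit edge ordering that sweeps along the path $u_1,\ldots,u_m$---is exactly the paper's approach, and your computation of $\l_M(X_i)=1$ at the block boundaries is correct. However, the internal order you commit to inside each block is backwards, and this is a genuine gap, not just bookkeeping.

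You choose to list the spokes $u_0u_i$ \emph{before} the path edge $e_{i-1}=u_{i-1}u_i$. Consider the prefix $J=X_{i-1}\cup E_i$ (all of block $B_{i-1}$'s worth plus the spokes to $u_i$, but not yet $e_{i-1}$), assuming $E_i\neq\emptyset$ for some $1<i<m$. Then $H[J]$ spans $\{u_0,u_1,\ldots,u_i\}$ and is connected, so $r_M(J)=i$. But the complement still contains $e_{i-1}$, which is incident with $u_{i-1}$; hence the complement spans $\{u_0,u_{i-1},u_i,u_{i+1},\ldots,u_m\}$, is connected, and has rank $m-i+2$. Therefore $\l_M(J)=i+(m-i+2)-m=2$, so your ordering has width~$2$. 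Your sentence ``vertex sets differing from the above by at most the single vertex $u_i$'' is precisely where the argument breaks: the complement side does \emph{not} lose $u_{i-1}$, because $e_{i-1}$ is still sitting there.

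The fix is simply to reverse the internal order: list $e_{i-1}$ \emph{first}, then the spokes $E_i$. The resulting global order is $(E_1,e_1,E_2,e_2,\ldots,e_{m-1},E_m)$, which is exactly what the paper uses. With this order, every prefix $J$ ending inside stage~$j$ has $H[J]$ supported on $\{u_0,\ldots,u_j\}$ and $H[\oE-J]$ supported on $\{u_0,u_j,\ldots,u_m\}$, giving $r_M(J)\le j$ and $r_M(\oE-J)\le m-j+1$, hence $\l_M(J)\le 1$. Note that the paper does not bother to establish connectivity of the two sides; the vertex-count upper bounds on rank are all that is needed.
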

\begin{proof}
Let $H$ be an umbrella on $m+1$ vertices $u_0, u_1, \ldots, u_m$,
where $u_0$ is the vertex such that $H - \{u_0\}$ is a simple path.
For $i \in [m]$, let $E_i$ denote the 
set of edges between $u_0$ and $u_i$. Also, for 
$j \in [m-1]$, let $e_j$ denote the edge between $u_j$ and $u_{j+1}$. 
Consider any ordering of the edges of $H$ that induces
the ordered partition
$$
(E_1,e_1,E_2,e_2,\ldots,E_{m-1},e_{m-1},E_m).
$$

Let $J = \left(\bigcup_{i=1}^{j-1} (E_i \cup \{e_i\})\right) \cup X$,
with $X \subset E_j$ ($X$ may be empty). 
Note that the subgraph, $H[J]$, of $H$ 
induced by the edges in $J$ is incident only with vertices in 
$\{u_0,u_1,\ldots,u_j\}$. Therefore, setting $M = M(H)$,
$r_M(J) = |V(H[J])|-1 \leq j$.
Similarly, the subgraph of $H$ induced by the edges in 
$E(H) - J$ is incident only with vertices in 
$\{u_j,u_{j+1},\ldots,u_m,u_0\}$, and so, $r_M(E(H)-J) \leq m-j+1$.

Thus, $\l_M(J) \leq j + (m-j+1) - m = 1$, and it follows 
that $\pw(M) \leq 1$. Being graphic, $M$ is $\F_q$-representable,
and hence, $M \in \cP_{1,q}$.
\end{proof} \mbox{} \\[-6pt]

This completes the proof of Theorem~\ref{P1q_thm}. \\[-6pt]

As a corollary to the theorem, we give a coding-theoretic
characterization of the code family $\mfC(\cP_{1,q})$. In
coding theory, an $\F_q$-representation of a uniform matroid is called
a \emph{maximum-distance separable (MDS) code}. For any field $\F_q$,
the matrices $G_4$, $G_{2,3}$ and $G_{2,3}^*$ below are $\F_q$-representations
of $M(K_4)$, $M(K_{2,3})$ and $M^*(K_{2,3})$, respectively.
$$
G_4 = 
\left[
\begin{array}{cccccc}
1  &  0  &  0  &  1  &  0  & -1 \\
0  &  1  &  0  &  1  &  1  & -1 \\
0  &  0  &  1  &  0  &  1  & -1 \\
\end{array}
\right];
$$
$$
G_{2,3} = 
\left[
\begin{array}{cccccc}
1  &  0  &  0  & 0 & -1  &  -1  \\
0  &  1  &  0  & 0 & 1  &  0  \\
0 & 0 & 1 & 0 & 0 & 1 \\
0 & 0 & 0 & 1 & 1 & 1 
\end{array}
\right]; \  \ 
G^*_{2,3} = 
\left[
\begin{array}{cccccc}
1  &  -1  &  0  & -1 & 1  &  0  \\
1  &  0  &  -1  & -1 & 0  &  1  \\
\end{array}
\right].
$$
The matroids $M(K_4)$, $M(K_{2,3})$ and $M^*(K_{2,3})$, being binary,
are uniquely representable over $\F_q$, in the matroid-theoretic sense 
\cite[Section~6.3 and Theorem~10.1.3]{oxley}. We let $\cC(K_4)$,
$\cC(K_{2,3})$ and $\cC(K_{2,3})^\perp$ denote the codes over $\F_q$
generated by the matrices $G_4$, $G_{2,3}$ and $G_{2,3}^*$, respectively.
\\[-6pt]

\begin{corollary}
Let $\F_q$ be an arbitrary finite field. A linear code $\cC$
over $\F_q$ has trellis-width at most one iff it contains
no minor equivalent to any of the following:
\begin{itemize}
\item[(i)] a $[4,2]$ MDS code;
\item[(ii)] a code obtainable by applying an automorphism of
$\F_q$ to one of the codes $\cC(K_4)$,
$\cC(K_{2,3})$ and $\cC(K_{2,3})^\perp$. \\[-6pt]
\end{itemize}
\label{CP1q_cor}
\end{corollary}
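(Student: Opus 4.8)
The plan is to derive Corollary~\ref{CP1q_cor} from Theorem~\ref{P1q_thm} by passing through the code--matroid dictionary of Section~\ref{codes_matroids_section}; the only real work is to describe, concretely, which codes over $\F_q$ have associated matroid isomorphic to one of the four excluded matroids $U_{2,4}$, $M(K_4)$, $M(K_{2,3})$, $M^*(K_{2,3})$.

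First, since $\tw(\cC)=\pw(M(\cC))$, the family $\mfC(\cP_{1,q})$ of codes of trellis-width at most one is exactly the family appearing in Theorem~\ref{Pwq_thm} for $w=1$, and it is minor-closed. By the remark at the end of Section~\ref{codes_matroids_section}, if $\cF$ denotes the set of excluded minors of $\cP_{1,q}$ --- which by Theorem~\ref{P1q_thm} consists of (the isomorphism classes of) $U_{2,4}$, $M(K_4)$, $M(K_{2,3})$ and $M^*(K_{2,3})$ --- then $\mfC(\cF)$ is the set of excluded minors of $\mfC(\cP_{1,q})$, and hence $\cC$ has trellis-width at most one iff no minor of $\cC$ is equivalent to a code in $\mfC(\cF)$. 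So it remains to identify $\mfC(\cF)$ with the set of codes described in items (i) and (ii): that is, to show that a code $\cD$ over $\F_q$ satisfies $M(\cD)\cong N$ for some $N\in\cF$ iff either $\cD$ is a $[4,2]$ MDS code, or $\cD\equiv\sigma(\cE)$ for some automorphism $\sigma$ of $\F_q$ and some $\cE\in\{\cC(K_4),\cC(K_{2,3}),\cC(K_{2,3})^\perp\}$.

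Next I would treat the four matroids individually. For $U_{2,4}$: $M(\cD)\cong U_{2,4}$ means $\cD$ has length $4$, dimension $2$, and no two of its coordinates are dependent --- equivalently $\cD$ is a $[4,2,3]$, hence MDS, code --- and conversely every $[4,2]$ MDS code has $M(\cD)\cong U_{2,4}$ (this item being vacuous when $q=2$, consistent with $U_{2,4}$ not being $\F_2$-representable). For the three binary matroids I would invoke unique representability: $M(K_4)$, $M(K_{2,3})$ and $M^*(K_{2,3})$ are binary and $\F_q$-representable, hence uniquely $\F_q$-representable \cite[Theorem~10.1.3]{oxley}, so any two $\F_q$-representations of one of them differ only by row operations and zero-row changes, column scalings, a relabelling permutation of the columns, and a field automorphism. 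Thus if $M(\cD)\cong M(K_4)$, an arbitrary generator matrix of $\cD$ differs from $G_4$ by such operations; the row operations do not change the row space, while the column scalings, the column permutation, and the field automorphism $\sigma$ carry $\cC(K_4)$ to a code which --- since a field automorphism commutes with coordinate permutations and rescales scalars together with vectors --- is equivalent to $\sigma(\cC(K_4))$. The identical argument gives $\cD\equiv\sigma(\cC(K_{2,3}))$ when $M(\cD)\cong M(K_{2,3})$, using $G_{2,3}$, and $\cD\equiv\sigma(\cC(K_{2,3})^\perp)$ when $M(\cD)\cong M^*(K_{2,3})$, using $G^*_{2,3}$ (here $M(\cC(K_{2,3})^\perp)=M^*(K_{2,3})$ since $G^*_{2,3}$ represents $M^*(K_{2,3})$, in accordance with the duality $M(\cC^\perp)=M^*(\cC)$ of Section~\ref{codes_matroids_section}). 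Conversely, applying a field automorphism to a code, or applying a code equivalence, preserves the associated matroid up to isomorphism, so every code $\equiv\sigma(\cE)$ has $M$ isomorphic to the corresponding matroid in $\cF$.

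Combining the two steps yields the corollary: using $M(\cC/X\punc Y)=M(\cC)\con X\del Y$, the matroid minors of $M(\cC)$ are exactly the matroids $M(\cD)$ with $\cD$ a minor of $\cC$, so $\cC$ has trellis-width at most one iff no such $M(\cD)$ is isomorphic to a member of $\cF$, i.e.\ iff $\cC$ contains no minor equivalent to a $[4,2]$ MDS code or to a field-automorphism image of $\cC(K_4)$, $\cC(K_{2,3})$ or $\cC(K_{2,3})^\perp$. The one delicate point --- and the step I expect to require the most care in writing out --- is the third one: aligning the matroid-theoretic notion of equivalent $\F_q$-representations (which allows a field automorphism) with the coding-theoretic notion of code equivalence (which does not), and verifying that exactly this discrepancy is captured by the clause ``obtainable by applying an automorphism of $\F_q$'' in item (ii).
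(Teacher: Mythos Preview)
Your proposal is correct and follows exactly the route the paper intends: the paper states the corollary without proof, but the paragraph preceding it supplies precisely the ingredients you use --- the code--matroid dictionary of Section~\ref{codes_matroids_section}, the identification of $\F_q$-representations of $U_{2,4}$ with $[4,2]$ MDS codes, and the unique $\F_q$-representability of the three binary matroids \cite[Theorem~10.1.3]{oxley} to pin down their code representations up to code equivalence and a field automorphism. Your explicit handling of the gap between matroid-theoretic representation equivalence (which permits a field automorphism) and coding-theoretic equivalence (which does not) is exactly the point behind the clause ``obtainable by applying an automorphism of $\F_q$'' in item~(ii), and is the one detail the paper leaves entirely to the reader.
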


The problem of finding the complete set of excluded minors for 
$\cP_{w,q}$ quickly becomes difficult for $w > 1$. The
main obstacle is that we may only assume the basic property
of 2-connectedness for such excluded minors. The class 
$\cP_{w,q}$ is not even closed under 2-sums, so excluded minors
for the class need not be 3-connected. An illustration of this
is given by the following result, which provides a
partial list of excluded minors for $\cP_{2,q}$. \\[-6pt]

\begin{figure}[!t]
\centering{\epsfig{file=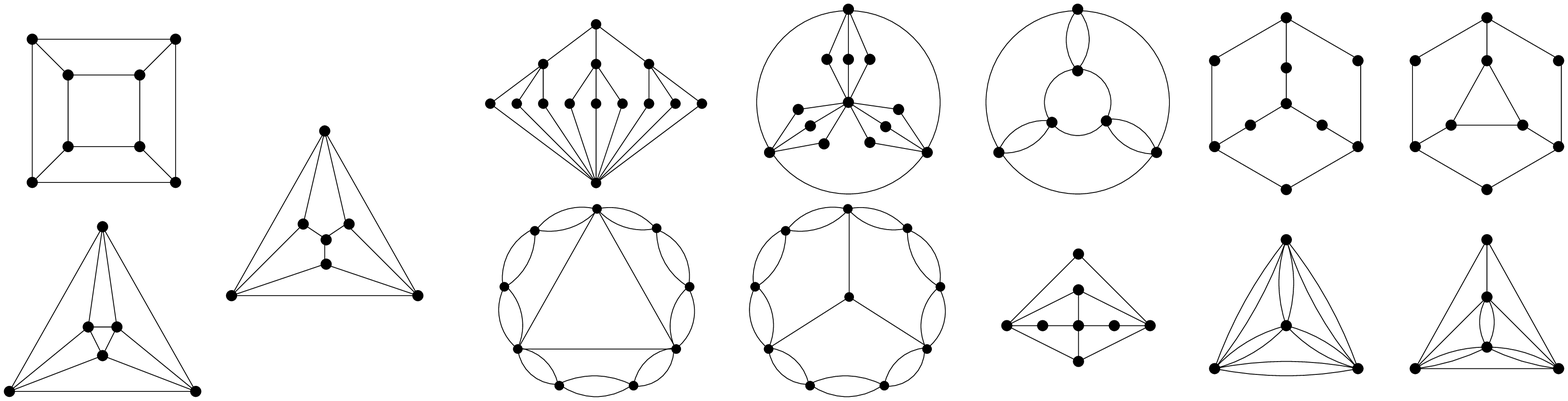, width=12cm}}
\caption{Some of the planar graphs whose cycle matroids 
are excluded minors for $P_{2,q}$.}
\label{P2q_ex_minors}
\end{figure}

\begin{proposition}
For any finite field $\F_q$, the matroids $F_7$, $F_7^*$,
$M(K_5)$, $M^*(K_5)$, $M(K_{3,3})$, $M^*(K_{3,3})$, and $M(\cG)$, 
where $\cG$ is any of the planar graphs in Figure~\ref{P2q_ex_minors}, 
are excluded minors for $\cP_{2,q}$. If $q \geq 4$, then 
$U_{3,6}$ is also an excluded minor for $\cP_{2,q}$. \\[-6pt]
\label{P2q_prop}
\end{proposition}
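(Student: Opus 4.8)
The plan is to establish two things for each matroid $N$ in the list: first, that $N \notin \cP_{2,q}$, i.e., $\pw(N) \geq 3$; and second, that every proper minor of $N$ has pathwidth at most $2$. For the first part, I would use the same style of counting argument as in the proof of Proposition~\ref{TC1_prop1}: for a matroid $N$ on $n$ elements, pick any ordering $(e_1,\ldots,e_n)$ and bound $\l_N(e_1,\ldots,e_{\lceil n/2 \rceil})$ from below by showing $r_N(X) + r_N(E(N)-X) \geq \rank(N) + 3$ for any balanced split $X$. For $M(K_5)$ (ten elements, rank four) any five-element subset spans at least rank three, and one checks that the two halves cannot both have rank exactly three while their union has rank four unless the split is very special, forcing the sum of the two ranks to be at least seven; similar small-case checks handle $M(K_{3,3})$, $F_7$, $F_7^*$, and the graphs in Figure~\ref{P2q_ex_minors}. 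Duality ($\pw(N) = \pw(N^*)$) halves the work for the starred matroids. For $U_{3,6}$ when $q \geq 4$: any three-element subset of a rank-three uniform matroid on six elements has rank three, so a balanced $(3,3)$ split gives $\l = 3+3-3 = 3$; and $U_{3,6}$ is $\F_q$-representable precisely when $q \geq 4$ (it needs an MDS code of those parameters), which is why the hypothesis appears.

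For the second part — that every proper minor lies in $\cP_{2,q}$ — the cleanest route is to reduce to checking single-element deletions and contractions, since by Lemma~\ref{minor_pathwidth_lemma} (applied in reverse: a minor of a minor is a minor) it suffices to show $\pw(N \del e) \leq 2$ and $\pw(N \con e) \leq 2$ for every element $e$. Here I would exploit structure: for the graphic matroids $M(K_5)$, $M(K_{3,3})$ and $M(\cG)$ with $\cG$ from Figure~\ref{P2q_ex_minors}, deleting or contracting an edge yields (up to simplification and adding parallel elements, which do not change pathwidth) a graph that one can explicitly exhibit an ordering/path-decomposition for of width $\leq 2$; by the symmetry of $K_5$ and $K_{3,3}$ there is essentially one case each. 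For the cographic/self-dual cases $M^*(K_5)$, $M^*(K_{3,3})$, $F_7$, $F_7^*$, use duality: $N \del e = (N^* \con e)^*$, so proper-minor checks for $N$ follow from those for $N^*$. The matroid $U_{3,6}$ is simplest: $U_{3,6} \del e \cong U_{3,5}$ and $U_{3,6} \con e \cong U_{2,5}$, and $\pw(U_{r,n}) = \min(r, n-r)$ by a direct computation, giving $\pw(U_{3,5}) = 2$ and $\pw(U_{2,5}) = 2$.

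The main obstacle I anticipate is the sheer volume of the pathwidth-$\leq 2$ verifications for the minors — with seven or eight base matroids and up to ten elements each, the case analysis is large, and the graphs in Figure~\ref{P2q_ex_minors} (which I cannot see the internal structure of) must each be handled by hand. The redeeming feature is that every individual check is a finite, mechanical computation of $\min_{\text{orderings}} w_M$ on a matroid with at most ten or so elements, and symmetry (vertex- and edge-transitivity of $K_5$, $K_{3,3}$; duality; the fact that parallel elements and loops are irrelevant to pathwidth) collapses most cases. A secondary subtlety is making sure, for the $\F_q$-representability claims, that $M(K_5)$, $M(K_{3,3})$, $F_7$ (when $q$ is even) and $F_7^*$ are genuinely $\F_q$-representable so that they are legitimate members of the ambient class over which excluded minors are defined; $F_7$ is representable only in characteristic $2$, but since $\cP_{2,q}$ is by definition a class of $\F_q$-representable matroids, one states the $F_7$ and $F_7^*$ assertions only for the relevant $q$, exactly as the $U_{3,6}$ clause is qualified. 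I would not expect the proposition to hold without these qualifications, and I would flag in the proof that the list is asserted to be incomplete precisely because 2-connected (but not 3-connected) excluded minors have not been ruled out.
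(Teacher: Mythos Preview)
Your approach is the same as the paper's: the paper explicitly omits the proof, saying it ``is only a matter of verifying that for each matroid $M$ listed in the proposition, $M \notin \cP_{2,q}$, but $M \del e, M \con e \in \cP_{2,q}$ for any $e \in E(M)$.'' Your outline of how to carry out these finite checks is reasonable and matches the spirit of Proposition~\ref{TC1_prop1}.

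There is, however, a genuine conceptual error in your last paragraph. An excluded minor for $\cP_{2,q}$ need not itself be $\F_q$-representable; it need only fail to lie in $\cP_{2,q}$ (for \emph{any} reason, including non-representability) while all of its proper minors do lie in $\cP_{2,q}$. For odd $q$, $F_7 \notin \cP_{2,q}$ simply because $F_7$ is not $\F_q$-representable; meanwhile every single-element deletion or contraction of $F_7$ is isomorphic to $M(K_4)$, which is regular (hence $\F_q$-representable for every $q$) and has pathwidth~$2$. So $F_7$ \emph{is} an excluded minor for $\cP_{2,q}$ for every finite field $\F_q$, exactly as the proposition asserts, and no qualification is needed. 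The same goes for $F_7^*$. Relatedly, your explanation of the $q \geq 4$ hypothesis on $U_{3,6}$ is not quite right: what matters is not whether $U_{3,6}$ is representable, but whether its proper minors $U_{3,5}$ and $U_{2,5}$ belong to $\cP_{2,q}$; for $q \leq 3$ these are not $\F_q$-representable, so $U_{3,6}$ is not minor-minimal outside the class. If you carried out the proof with your stated assumption that excluded minors must live in the ambient representable class, you would incorrectly restrict the $F_7$ and $F_7^*$ assertions and misdiagnose the role of the $U_{3,6}$ hypothesis.
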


We omit the proof, as it is only a matter of verifying that for each 
matroid $M$ listed in the proposition, $M \notin \cP_{2,q}$, but
$M \del e, M \con e \in \cP_{2,q}$ for any $e \in E(M)$.
We point out that the cycle matroids of all but the three leftmost 
graphs in Figure~\ref{P2q_ex_minors} are not 3-connected. 

\section*{Acknowledgment} The author would like to thank Jim Geelen for
contributing some of his ideas to this paper, and Alexander Vardy for 
pointers to the prior literature on trellis complexity.


\end{document}